\DeclareMathOperator{\disc}{disc}
\DeclareMathOperator{\herdisc}{herdisc}
\DeclareMathOperator{\tr}{tr}
\DeclareMathOperator{\rank}{rank}
\newcommand{\R}{\mathbb{R}}
\newcommand{\Norm}{\mathcal{N}}
\newcommand{\E}{\mathbb{E}}
\newcommand{\Uni}{\mathcal{U}}
\newcommand{\Fam}{\mathcal{S}}
\newcommand{\eps}{\varepsilon}
\newcommand{\Minimize}{\textsc{HereditaryMinimize}\xspace}
\newcommand{\Sample}{\textsc{Sample}\xspace}
\newcommand{\SampleMany}{\textsc{SampleMany}\xspace}
\begin{document}

\newtheorem{theorem}{Theorem}
\newtheorem{lemma}{Lemma}
\newtheorem{claim}{Claim}

\title{Fast Discrepancy Minimization with Hereditary Guarantees}

\author{
  Kasper Green Larsen\thanks{Supported by Independent Research Fund Denmark (DFF) Sapere Aude Research Leader grant
No 9064-00068B.}\\
  Aarhus University\\
  \texttt{larsen@cs.au.dk}
}

\date{}

\maketitle

\begin{abstract}
Efficiently computing low discrepancy colorings of various set systems, has been studied extensively since the breakthrough work by Bansal (FOCS 2010), who gave the first polynomial time algorithms for several important settings, including for general set systems, sparse set systems and for set systems with bounded hereditary discrepancy. The hereditary discrepancy of a set system, is the maximum discrepancy over all set systems obtainable by deleting a subset of the ground elements. While being polynomial time, Bansal's algorithms were not practical, with e.g. his algorithm for the hereditary setup running in time $\Omega(m n^{4.5})$ for set systems with $m$ sets over a ground set of $n$ elements. More efficient algorithms have since then been developed for general and sparse set systems, however, for the hereditary case, Bansal's algorithm remains state-of-the-art. In this work, we give a significantly faster algorithm with hereditary guarantees, running in $O(mn^2\lg(2 + m/n) + n^3)$ time. Our algorithm is based on new structural insights into set systems with bounded hereditary discrepancy. We also implement our algorithm and show experimentally that it computes colorings that are significantly better than random and finishes in a reasonable amount of time, even on set systems with thousands of sets over a ground set of thousands of elements.
\end{abstract}  

\thispagestyle{empty}
\newpage
\setcounter{page}{1}

\section{Introduction}
In discrepancy minimization, we are given a set system $\Fam = \{S_1,\dots,S_m\}$, with each set $S_i$ being a subset of a universe $\Uni = \{u_1,\dots,u_n\}$ of $n$ elements. The goal is to find a red-blue coloring of the elements of $\Uni$ such that each set $S_i$ is colored as evenly as possible. More formally, if we define the $m \times n$ incidence matrix $A$ such that entry $(i,j)$ is $1$ if $u_j \in S_i$ and $0$ otherwise, then we seek a ``coloring'' $x \in \{-1,1\}^n$ for which the discrepancy  $\disc(A,x) := \|Ax\|_\infty = \max_i |(Ax)_i|$ is as small as possible. Discrepancy minimization has been studied for decades and has found numerous applications in theoretical computer science, see e.g. the textbooks by Matousek~\cite{matousek1999geometric} and Chazelle~\cite{chazelle:discrepancy}.

A natural generalization of discrepancy minimization is to allow arbitrary real matrices $A \in \R^{m \times n}$, rather than only incidence matrices corresponding to set systems. For a matrix $A \in \R^{m \times n}$, its discrepancy is defined as $\disc(A) = \min_{x \in \{-1,1\}^n} \disc(A,x)$, i.e. as the best possible coloring achievable. Understanding the discrepancy of various matrices has been the focus of much research. The three main directions pursued in the area can be roughly categorized as follows:
\begin{enumerate}
\item Set systems corresponding to geometric ranges. Here the matrices $A$ of interest correspond to incidence matrices, where the elements $\Uni = \{u_1,\dots,u_n\}$ typically are points in $\R^d$ and each set $S_i$ corresponds to the subset of $\Uni$ that lies inside a geometric range such as a half-space or an axis-aligned rectangle~\cite{alexander:half,matousek:half,matousekGamma,larsenDisc}.

\item General set systems, where one only assumes that $|a_{i,j}| \leq 1$ for all $(i,j)$. The classic ``Six Standard Deviations Suffice'' result by Spencer~\cite{spencer} is one of the pillars of this line of research, showing that for any $n \times n$ matrix $A$ with $|a_{i,j}| \leq 1$, we have $\disc(A) \leq 6 \sqrt{n}$. This is highly surprising as a random coloring $x$ for most matrices $A$ would result in a discrepancy of $\disc(A,x) = \Omega(\sqrt{n \lg n})$ with high probability.

\item Sparse set systems, where each element of the universe $u_j \in \Uni$ is contained in at most $t$ sets $S_i$. The Beck-Fiala conjecture~\cite{beckfiala}, which is a special case of the Koml\'{o}s conjecture, is central in this direction, asserting that any such set system can be colored to achieve a discrepancy of only $O(\sqrt{t})$ (the Koml\'{o}s conjecture states that any sequence of vectors with unit $\ell_2$ norm can be assigned signs such that their sum has $\ell_\infty$ norm bounded by a constan). While not yet being resolved, the best upper bound, due to Banaszczyk~\cite{Banas:gauss} comes close by guaranteeing the existence of a coloring of discrepancy $O(\sqrt{t \lg n})$.
\end{enumerate}

\paragraph{Constructive Discrepancy Minimization.}
Much of the early work on discrepancy minimization focused only on showing the existence of low-discrepancy colorings~\cite{spencer,Banas:gauss,srinivasan} and it was initially unclear whether there exist polynomial time algorithms for computing near-optimal colorings. Indeed, a hardness result by Charikar et al.~\cite{Charikar} showed that it is NP-hard to distinguish whether $\disc(A)$ is $0$ or $\Omega(\sqrt{n})$ for matrices $A$ corresponding to set systems. This rules out any polynomial time multiplicative approximation algorithm for discrepancy minimization. While this might seem like an insurmountable obstacle for interesting algorithmic results, Bansal's~\cite{bansal} seminal paper presented a sequence of polynomial time algorithms with different exciting approximation guarantees. With respect to the main directions mentioned above, Bansal gave an algorithm for general set systems guaranteeing to find a coloring of discrepancy $O(\sqrt{n})$ for $n \times n$ set systems. In some sense, this can be thought of as an \emph{additive} $O(\sqrt{n})$ approximation. For the sparse set system case, he gave an algorithm guaranteeing a discrepancy of $O(\sqrt{t} \lg n)$, getting close, but not quite matching the non-constructive bound by Banaszczyk~\cite{Banas:gauss}. Later work~\cite{bansalKomlos} has since then closed the gap and given a polynomial time algorithm matching Banaszczyk's bound. Bansal's final algorithmic result gives a coloring with discrepancy related to the so-called \emph{hereditary discrepancy} of $A$~\cite{lovasz}. The hereditary discrepancy of $A$, denoted $\herdisc(A)$, is a classic measure defined as the maximum discrepancy of any submatrix $A'$ obtained by deleting a subset of the columns of $A$ (equivalently, deleting a subset of the points in $\Uni$). With this definition, his algorithm finds a coloring $x$ such that $\disc(A,x) = O(\lg n \cdot \herdisc(A))$. So while Charikar et al.'s NP-hardness result shows that it is generally hard to find colorings of discrepancy $o(\sqrt{n})$, even when such colorings exist, Bansal's result shows that this is in fact possible if all sub-matrices of $A$ have low-discrepancy colorings. All of Bansal's algorithms are based on semi-definite programming (SDP) and are not as such practically efficient. Concretely, his algorithm with hereditary guarantees needs to solve $\Omega(n^2)$ semi-definite programs, all consisting of $m$ constraint matrices of size $n \times n$. With the current state-of-the-art SDP solvers~\cite{sdp}, this requires time $\Omega(m n^{4.5})$, making it practically infeasible even for modest size matrices. Since Bansal's celebrated work, follow-up work by Lovett and Meka~\cite{lovett} gave an algorithm for the general set system case, running in time $O((m+n)n^2)$ and Alweiss et al.~\cite{fastdisc} recently gave an even faster algorithm for sparse set systems, guaranteeing $\disc(A,x) = O(\sqrt{t} \lg n)$ in time $O(mn)$, i.e. linear time in the input size. However, for the hereditary setup, nothing faster than Bansal's result is known. Giving such an algorithm is the main focus of this work.

\paragraph{Our Contribution.}
In this work, we give a fast algorithm for finding low-discrepancy colorings when the hereditary discrepancy of the input matrix $A$ is small. Concretely, we show the following:
\begin{theorem}
  \label{thm:main}
  There is a randomized algorithm that on any $m \times n$ input matrix $A \in \R^{m \times n}$, finds a coloring $x \in
  \{-1,1\}^n$ for which $\disc(A,x) = O(\lg n \cdot \lg^{3/2}m \cdot \herdisc(A))$ in expected $O(mn^2\lg(2 + m/n) + n^3)$ time.
\end{theorem}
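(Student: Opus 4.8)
The plan couples the classical iterative partial-coloring paradigm with a Lovett--Meka-type random walk, the one new ingredient being a structural lemma that lets the hereditary bound play the role Bansal assigns to a per-iteration semidefinite program. Write $\lambda=\herdisc(A)$. It suffices to build a subroutine that, given $B\in\R^{m\times k}$ with $\herdisc(B)\le\lambda$ and a point $y\in[-1,1]^k$, returns $y'\in[-1,1]^k$ that agrees with $y$ wherever $y\in\{-1,1\}$, has at most $k/2$ fractional coordinates, and satisfies $\|B(y'-y)\|_\infty=O(\lg^{3/2}m\cdot\lambda)$. Applying this $\lceil\lg n\rceil$ times -- in round $t$ to the submatrix $A_{J_t}$ on the currently fractional set $J_t$, which satisfies $\herdisc(A_{J_t})\le\herdisc(A)=\lambda$ -- produces $x\in\{-1,1\}^n$, and since each increment $y^{(t)}-y^{(t-1)}$ is supported on $J_t$, the triangle inequality gives $\|Ax\|_\infty\le\sum_t\|A_{J_t}(y^{(t)}-y^{(t-1)})\|_\infty=O(\lg n\cdot\lg^{3/2}m\cdot\lambda)$, the discrepancy bound claimed.

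\textbf{The partial-coloring step, and the choice of thresholds.} The subroutine runs the Lovett--Meka walk from $y$: repeatedly take a small Gaussian step, projected onto the subspace orthogonal to the currently tight constraints -- coordinates already at $\pm1$, and rows $b_i$ of $B$ for which $|\langle b_i,z-y\rangle|$ has reached a chosen threshold $\tau_i$ -- freezing coordinates and constraints as they become tight. The Lovett--Meka analysis guarantees that, as long as $\sum_{i=1}^m\exp\!\big(-\tau_i^2/(16\|b_i\|_2^2)\big)\le k/16$, the walk terminates at a point with at least $k/2$ coordinates in $\{-1,1\}$ and $\|B(y'-y)\|_\infty\le\max_i\tau_i$. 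We choose the $\tau_i$ without knowing $\lambda$: letting $\theta$ be the $\lceil k/32\rceil$-th largest of the row norms $\|b_1\|_2,\dots,\|b_m\|_2$, we set $\tau_i=0$ for the at most $k/32$ rows with $\|b_i\|_2\ge\theta$ -- making them hard equality constraints -- and $\tau_i=\Theta\!\big(\|b_i\|_2\sqrt{\lg(2+m/k)}\big)$ for the rest; the equality constraints and the light rows each contribute at most $k/32$ to the Lovett--Meka sum, so the budget holds and $\|B(y'-y)\|_\infty\le\max_i\tau_i=O\!\big(\theta\sqrt{\lg(2+m/k)}\big)$.

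\textbf{The structural lemma (the crux).} What remains -- and what uses the hereditary hypothesis -- is to show that $\herdisc(B)\le\lambda$ forces $\theta=O(\lambda\lg m)$; equivalently, that fewer than $k/32$ rows of $B$ have Euclidean norm exceeding $\Omega(\lambda\lg m)$, so the equality constraints leave $\Omega(k)$ free dimensions and the reparametrization cannot destroy the combinatorial structure. With $\theta=O(\lambda\lg m)$ and $\sqrt{\lg(2+m/k)}=O(\sqrt{\lg m})$, the per-round bound becomes $O(\lg^{3/2}m\cdot\lambda)$ as required. I would prove the lemma by contraposition: a large family of ``heavy'' rows, once near-redundant ones are discarded, yields many well-separated heavy facet normals of $\{z:\|Bz\|_\infty\le1\}$, from which a greedy Gram--Schmidt/volumetric selection extracts a square submatrix whose determinant is large enough that the determinant lower bound on hereditary discrepancy (or the Lov\'{a}sz--Spencer--Vesztergombi / $\gamma_2$ characterizations) forces $\herdisc(B)>\lambda$ unless the family had only $O(\lg m)$ members. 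Carrying out this extraction with loss only the $\lg^{3/2}m$ of the theorem is the main obstacle; it is exactly what replaces Bansal's $\Omega(n^2)$ SDP solves, since the hereditary bound now merely certifies a closed-form choice of thresholds rather than computing a coloring.

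\textbf{Running time.} For a partial-coloring round on $k$ active coordinates and $m$ constraints: the row norms and $\theta$ are found in $O(mk)$ time, and running the Lovett--Meka walk -- maintaining an orthonormal basis of the active subspace under the $O(k)$ rank-one updates from freezing (at $O(k^2)$ each) and the $m$ inner products $\langle b_i,z-y\rangle$ over the $O(\lg(2+m/k))$ phases of the walk -- takes $O(mk^2\lg(2+m/k)+k^3)$ expected time. Since the $O(\lg n)$ rounds have geometrically decreasing active sizes $k\le n$, these costs telescope to $O(mn^2\lg(2+m/n)+n^3)$, which is the claimed expected running time.
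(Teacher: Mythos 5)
There is a genuine gap at the step you yourself identify as the crux. The structural lemma you need --- that $\herdisc(B)\le\lambda$ forces all but $k/32$ rows of $B$ to have Euclidean norm $O(\lambda\lg m)$ --- is false. Hereditary discrepancy does not control raw row norms, nor the number of heavy rows: the all-ones $m\times k$ matrix has $\herdisc=1$ while \emph{every} row has norm $\sqrt{k}$, so your $\theta$ is $\sqrt{k}$ and your per-round bound degenerates to the trivial $O(\sqrt{k\lg(2+m/k)})$; small perturbations of this example rule out any rescue by ``discarding near-redundant rows,'' and the determinant/$\gamma_2$ extraction you propose cannot get off the ground because arbitrarily many heavy rows may be (nearly) parallel, contributing nothing to a determinant lower bound. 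What is actually true, and what the paper proves (Theorem~\ref{thm:structure}), is a statement about row norms \emph{after projection}: there is an $(n/4)$-dimensional subspace, built by iteratively taking the top $n/(8\lg(8m/n))$ eigenvectors of $\bar B^T\bar B$ for a geometrically shrinking set of heavy rows, such that every row of $A(I-V^TV)$ has norm $O(\herdisc(A)\lg(2m/n))$; the proof hinges on Larsen's eigenvalue lower bound (Theorem~\ref{thm:disceigen}), i.e.\ on the spectrum of $B^TB$, not on how many rows are long. Once you have that subspace, the walk is run in its orthogonal complement and row thresholds scale with the \emph{projected} norms, which is what yields $O(\lg^{3/2}m\cdot\herdisc(A))$ per round. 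Your freeze-the-heavy-rows-as-equality-constraints device cannot substitute for this, since the heavy rows may number far more than $k/32$.

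A secondary issue: your running-time accounting assumes the Lovett--Meka walk on $k$ active coordinates finishes within the stated budget, but the vanilla Edge-Walk needs step size small enough that no coordinate jumps past $1$, forcing $\Omega(k^2)$ (or worse) iterations. The paper obtains $O(k)$ iterations only via an explicit modification --- capping each step by the largest $\mu$ with $\max\{\|x+v+\mu g\|_\infty,\|x+v-\mu g\|_\infty\}=1$, chosen symmetrically so the martingale argument survives --- and this is needed to reach $O(mn^2\lg(2+m/n)+n^3)$ overall. Your outer loop (iterated partial coloring with $\herdisc(A_{J_t})\le\herdisc(A)$ and the triangle inequality) does match the paper, but without a correct structural lemma and without the step-size modification the proposal does not establish either the discrepancy bound or the running time.
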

Since the input size is $m n$, the running time is only a factor $n \lg(2 + m/n)$ more than linear (at least for $m \geq n$). Comparing this to the $\Omega(m n^{4.5})$ running time of Bansal's algorithm, this is the first practical algorithm with hereditary guarantees. On the downside, we lose a factor $\lg^{3/2}m$ in the quality of the coloring found compared to Bansal's algorithm. To test the practicality of our algorithm, we implemented it and ran experiments on several types of matrices. The algorithm finishes in reasonable time (less than $30$ minutes on a standard laptop) even on matrices of size $4000 \times 4000$ and $10000 \times 2000$. Moreover, despite the logarithmic factors in the theoretical analysis, it returns colorings that have far lower discrepancy than random colorings have (around a factor $10$ smaller discrepancy on structured matrices), even when compared to repeatedly testing random colorings for the same amount of time as spent by our algorithm (around a factor $4$ smaller discrepancy). See Section~\ref{sec:experiments} for details.

\subsection{Technical Contributions and Algorithm Overview}
\label{sec:overview}
The main obstacle faced in the design of the new algorithm, is the current lack of structural understanding of matrices with low hereditary discrepancy. In more detail, the previous algorithm by Bansal only uses that $A$ has low hereditary discrepancy to guarantee the feasibility of a carefully designed SDP. As such, the insight that a concrete SDP is feasible, does not seem to provide any line of attack beyond solving an SDP. Our first technical contribution is thus to demonstrate that matrices $A$ with low hereditary discrepancy have very stringent requirements on the eigenvalues of $A^TA$. A connection between eigenvalues and hereditary discrepancy was earlier observed by Larsen~\cite{herel2} (other, slightly weaker/incomparable connections were also known~\cite{factNorms,chazelleLvov}) who proved:
\begin{theorem}[Larsen~\cite{herel2}]
  \label{thm:disceigen}
  For an $m \times n$ real matrix $A$, let $\lambda_1 \geq \lambda_2 \geq \cdots \geq \lambda_n \geq 0$ denote the eigenvalues of $A^TA$. For all positive integers $k \leq \min\{n,m\}$, we have
 \[
   \herdisc(A) \geq \frac{k}{2e} \sqrt{\frac{\lambda_k}{mn}}.
 \]
\end{theorem}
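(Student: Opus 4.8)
The plan is to route everything through the classical \emph{determinant lower bound} of Lov\'asz, Spencer and Vesztergombi: for any $k\times k$ submatrix $B$ of $A$, obtained by choosing any $k$ rows and any $k$ columns, one has $\herdisc(A)\ge \tfrac12|\det B|^{1/k}$. This holds for arbitrary real matrices, since deleting columns \emph{or} deleting rows can only decrease discrepancy, so $\herdisc(A)$ dominates the discrepancy of every square submatrix. Given this, the task reduces to producing a single $k\times k$ submatrix $B$ whose determinant is as large as $\lambda_k$ permits; the bound $\herdisc(A)\ge\tfrac12|\det B|^{1/k}$ together with the estimate $\binom{N}{k}\le (eN/k)^k$ will then yield the claimed inequality.

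I would locate $B$ by two successive averaging steps based on Cauchy--Binet. For a $k$-element column set $S\subseteq[n]$ write $A_S\in\R^{m\times k}$ for the corresponding column submatrix; since $(A^TA)_{S,S}=A_S^TA_S$, the quantity $\sum_{|S|=k}\det(A_S^TA_S)$ is exactly the sum of the principal $k\times k$ minors of $A^TA$, which equals the $k$-th elementary symmetric polynomial $e_k(\lambda_1,\dots,\lambda_n)$. Because every $\lambda_i\ge 0$ and $\lambda_1\ge\cdots\ge\lambda_k$, the single monomial $\lambda_1\cdots\lambda_k\ge\lambda_k^k$ already lower-bounds $e_k$, so averaging over the $\binom{n}{k}$ choices of $S$ produces a column set with $\det(A_S^TA_S)\ge \lambda_k^k/\binom{n}{k}$. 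Now apply Cauchy--Binet to $A_S$ itself: $\det(A_S^TA_S)=\sum_{R\subseteq[m],\,|R|=k}(\det A_{R,S})^2$ is a sum of $\binom{m}{k}$ nonnegative terms, so some $k$-row set $R$ satisfies $(\det A_{R,S})^2\ge \lambda_k^k/(\binom{n}{k}\binom{m}{k})$. Take $B:=A_{R,S}$.

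Plugging $|\det B|\ge \lambda_k^{k/2}\big/\sqrt{\binom{n}{k}\binom{m}{k}}$ into the determinant bound gives $\herdisc(A)\ge \tfrac12\,\lambda_k^{1/2}\big/\big(\binom{n}{k}\binom{m}{k}\big)^{1/(2k)}$, and using $\binom{n}{k}^{1/(2k)}\le\sqrt{en/k}$ and $\binom{m}{k}^{1/(2k)}\le\sqrt{em/k}$ converts this to $\herdisc(A)\ge\tfrac12\,\lambda_k^{1/2}\cdot\tfrac{k}{e\sqrt{mn}}=\tfrac{k}{2e}\sqrt{\lambda_k/(mn)}$, as required. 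Every step here is short; the only genuine decision is the conceptual one made at the outset, namely not to try to exploit low-discrepancy colorings directly the way an SDP-based approach does, but instead to pass to the determinant bound and then control $e_k$ from below by its largest single term. If a self-contained treatment is preferred, the one imported fact, the determinant lower bound, has a standard one-paragraph pigeonhole/lattice proof that I would include; otherwise a citation suffices.
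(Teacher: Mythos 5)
This theorem is imported by the paper from Larsen~\cite{herel2} and is not proved here; it is only cited. Your argument is correct --- the Lov\'asz--Spencer--Vesztergombi determinant bound applied to a square submatrix, two applications of Cauchy--Binet (principal minors of $A^TA$ give $e_k(\lambda_1,\dots,\lambda_n)\ge\lambda_k^k$, then a row-averaging step), and the estimate $\binom{N}{k}\le(eN/k)^k$ --- and it is essentially the proof from the cited work, recovering the exact constant $\frac{k}{2e}\sqrt{\lambda_k/(mn)}$.
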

In Larsen's work, the same lower bound was also proved for the $\ell_2$-version of hereditary discrepancy. Said briefly, the $\ell_2$-discrepancy of a matrix $A \in \R^{m \times n}$ is defined as $\disc_2(A) = \min_{x \in \{-1,1\}^n} \|Ax\|_2/\sqrt{m}$ and the $\ell_2$-hereditary discrepancy is again defined as the maximum over submatrices of $A$. For $\ell_2$-discrepancy, this connection showed a clear path towards finding low discrepancy colorings. At a high level, the idea is to find a coloring $x \in \{-1,1\}^n$ that is orthogonal to the $n/2$ eigenvectors corresponding to the largest eigenvalues of $A^TA$. For such $x$, it holds that $\|Ax\|_2^2 = x^TA^TAx \leq \lambda_{n/2} \cdot n$, which by the above relation implies $\|Ax\|_2 = O(\sqrt{\lambda_{n/2} \cdot n}) = O(\herdisc_2(A) \sqrt{m})$. By the definition $\disc_2(A,x) = \|Ax\|_2/\sqrt{m}$, this gives a discrepancy of $O(\herdisc_2(A))$. This argument assumed that it was possible to find a coloring $x$ orthogonal to the top $n/2$ eigenvectors. This is not quite the case, resulting in an additional $\sqrt{\lg n}$ factor in the $\ell_2$-discrepancy of Larsen's algorithm by using the classic \emph{partial coloring} technique. We will not discuss this further here.

For the more central $\ell_\infty$-discrepancy, the above connection to eigenvalues and eigenvectors seems very hard to exploit. Concretely, the mere fact that $\|Ax\|_2^2$ is small, says nothing interesting about $\|Ax\|_\infty$, and it is completely unclear that simple bounds on the eigenvalues of $A^TA$ may be useful for finding an $x \in \{-1,1\}^n$ with small $\disc(A,x)$. One of our main technical contributions, is to show that the connection to eigenvalues in Theorem~\ref{thm:disceigen} may be exploited for $\ell_\infty$-discrepancy as well. Concretely, we first observe that if we define $V$ as the matrix having the $n/2$ eigenvectors of $A^TA$ with largest eigenvalues as rows, then $A(I-V^TV)$ cannot have more than $m/2$ rows of norm exceeding $O(\herdisc(A))$. The reason for this, is that the rows of $B=A(I-V^TV)$ are the projections of the rows of $A$ onto the orthogonal complement of the top $n/2$ eigenvectors of $A^TA$ and therefore it must be the case that $B^TB$ has all eigenvalues bounded by $\lambda_{n/2}$. But $\tr(B^TB)$ is equal to the sum of squared row norms of $B$. If there were more than $m/2$ rows with norm more than $c \cdot \herdisc(A)$ for a large constant $c$, then we would have $\tr(B^TB) \geq (c/2) m \herdisc^2(A)$. But $\tr(B^TB)$ is the sum of the eigenvalues of $B^TB$. Since $B^TB$ has rank at most $n$, this implies that $B^TB$ has an eigenvalue of at least $cm \herdisc^2(A)/(2n)$. By Theorem~\ref{thm:disceigen}, this is greater than $\lambda_{n/2}$ for $c$ big enough, i.e. a contradiction.

With this observation in mind, a natural idea would be to find a coloring $x$ in the orthogonal complement of the span of the rows of $V$. Such an $x$ would behave well for the rows of norm $O(\herdisc(A))$, but not for the remaining up to $m/2$ rows of larger norm. To get around this, we instead pick only the top $n/(8\lg(8m/n))$ eigenvectors of $A^TA$ when forming $V$. Then we can again argue that there is at most $m/2$ rows of $B = A(I-V^TV)$ with norm exceeding $O(\herdisc(A) \lg(8m/n))$. We then focus on the submatrix $B'$ of $B$ obtained by deleting all but the $m/2$ rows of largest norm. We then find the top $n/(8\lg(8m/n))$ eigenvectors of $B'^TB'$ and add those as rows of $V$. We can then argue that this reduces the number of rows of $A(I-V^TV)$ of norm more than $O(\herdisc(A) \lg(8m/n))$ to $m/4$. Repeating this, where we delete all but the $m/2^i$ rows of largest norm from $B$, for $\lg(8m/n)$ rounds, leaves less than $n/8$ rows of norm exceeding $O(\herdisc(A) \lg(8m/n))$. We can finally add a basis for the subspace spanned by those $n/8$ rows to $V$. This results in a matrix $V$ with $n/4$ rows, such that every row of $B = A(I-V^TV)$ has norm at most $O(\herdisc(A) \lg(8m/n))$. That is, we have shown the following theorem capturing a key structural property of matrices with low hereditary discrepancy:
\begin{theorem}
  \label{thm:structure}
For any $m \times n$ real matrix $A$ with $m \geq n$, there is an $(n/4) \times n$ matrix $V$, having unit length orthogonal rows, such that all rows of $A(I-V^TV)$ have norm at most $O(\herdisc(A) \lg(2m/n))$. Moreover, such a matrix $V$ can be computed in $O(mn^2\lg(2 + m/n))$ time.
\end{theorem}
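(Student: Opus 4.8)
Here is how I would prove Theorem~\ref{thm:structure}.

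The plan is to turn the informal ``eigenvector-peeling'' argument of Section~\ref{sec:overview} into an induction over $L := \lceil \lg(8m/n)\rceil$ rounds. Set $k := \lfloor n/(8L)\rfloor$ and fix a threshold $\tau := c\cdot\herdisc(A)\cdot L$ for a suitable absolute constant $c$; note $\tau = O(\herdisc(A)\lg(2m/n))$. One may assume $n \ge 8L$, so that $k \ge 1$: otherwise $\lg(2m/n) = \Omega(n)$, and since $\herdisc(A) \ge \max_{i,j}|a_{ij}|$ (consider one-column submatrices) every row of $A$ has Euclidean norm at most $\sqrt n\,\herdisc(A) = O(\herdisc(A)\lg(2m/n))$, so any $V$ with orthonormal rows works. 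I build a nested chain of subspaces $\{0\} = W_0 \subseteq W_1 \subseteq \cdots \subseteq W_L$, with $W_i$ spanned by at most $ki$ orthonormal vectors (ultimately the rows of $V$); write $P_i$ for the orthogonal projection onto $W_i$ and $A_i := A(I-P_i)$ for the matrix whose rows are those of $A$ projected into $W_i^\perp$. The invariant to maintain is: after round $i$, at most $\lceil m/2^i\rceil$ rows of $A_i$ have norm exceeding $\tau$. This is trivial for $i=0$.

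In round $i$ (for $i = 0,\dots,L-1$), set $m_i := \lceil m/2^i\rceil$, let $T_i \subseteq [m]$ index the $m_i$ rows of $A_i$ of largest norm (by the invariant this captures every row of norm $>\tau$), and let $A_i' := A_{T_i}(I-P_i)$, where $A_{T_i}$ is the row-submatrix of $A$ on $T_i$. Take $V_i$ to be $k$ orthonormal eigenvectors realizing the $k$ largest eigenvalues of $M_i := (A_i')^T A_i'$ (padded with vectors of $W_i^\perp$ if $M_i$ has rank below $k$); since the rows of $A_i'$ lie in $W_i^\perp$, $V_i$ may be chosen inside $W_i^\perp$, so $W_{i+1} := W_i \oplus \vspan(\text{rows of }V_i)$ has dimension at most $k(i+1)$, $P_{i+1} = P_i + V_i^T V_i$, and hence $A_{i+1} = A_i(I - V_i^T V_i)$. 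The key step is to bound how many rows of $A_{i+1}$ still exceed $\tau$. Restricted to $T_i$, that submatrix --- call it $A_{i+1}'$ --- equals $A_i'(I - V_i^T V_i)$, and because the rows of $V_i$ realize the top $k$ eigenvalues of $M_i$, the largest eigenvalue of $(A_{i+1}')^T A_{i+1}'$ equals $\lambda_{k+1}(M_i)$. Moreover $M_i = (I-P_i)(A_{T_i}^T A_{T_i})(I-P_i)$ is a zero-padded compression of $A_{T_i}^T A_{T_i}$ to the space $W_i^\perp$, which has dimension at least $n - kL \ge 7n/8 \ge k+1$, so the Courant--Fischer min-max characterization gives $\lambda_{k+1}(M_i) \le \lambda_{k+1}(A_{T_i}^T A_{T_i})$. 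Next, deleting rows never increases hereditary discrepancy --- a column-submatrix of $A_{T_i}$ is a row-submatrix of a column-submatrix of $A$, and dropping rows cannot increase discrepancy --- so $\herdisc(A_{T_i}) \le \herdisc(A)$, and Theorem~\ref{thm:disceigen} applied to the $m_i \times n$ matrix $A_{T_i}$ at index $k+1$ yields $\lambda_{k+1}(A_{T_i}^T A_{T_i}) \le 4e^2\herdisc(A)^2 m_i n/k^2$. Since $A_{i+1}'$ has rank at most $n$, the sum of the squared Euclidean norms of its rows is at most $n$ times this, hence at most $4e^2\herdisc(A)^2 m_i n^2/(k^2\tau^2) \le m_i/2$ of its rows exceed $\tau$, provided $\tau \ge 2\sqrt 2\,e\,\herdisc(A)\,n/k$; since $n/k \le 16L$, a large enough absolute $c$ ensures this. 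The rows of $A_{i+1}$ outside $T_i$ had norm $\le\tau$ already in $A_i$ (the invariant puts at most $m_i$ rows above $\tau$, all inside $T_i$) and passing to $A_{i+1}$ only shrinks them, so in total at most $\lceil m_i/2\rceil \le \lceil m/2^{i+1}\rceil$ rows of $A_{i+1}$ exceed $\tau$: the invariant survives.

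After $L$ rounds the invariant leaves at most $\lceil m/2^L\rceil \le \lceil n/8\rceil$ rows of $A_L$ of norm $>\tau$; let $U$ be their span, which lies in $W_L^\perp$. Appending an orthonormal basis of $U$ to the generators of $W_L$ gives the final $V$: it has orthonormal rows, at most $kL + \lceil n/8\rceil \le n/4$ of them (if fewer, pad with arbitrary orthonormal rows, which only shrinks row norms), and $I - V^T V$ is the projection onto $(W_L \oplus U)^\perp$, so each offending row of $A$ --- lying in $U$ after projection onto $W_L^\perp$ --- maps to $0$, while every other row, already of norm $\le\tau$ in $A_L$, keeps norm $\le\tau$. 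Thus all rows of $A(I-V^TV)$ have norm $O(\herdisc(A)\lg(2m/n))$. For the running time, in round $i$ I maintain $A_{i+1} = A_i - (A_i V_i^T)V_i$ in $O(mnk)$ time, compute row norms and extract $T_i$ in $O(mn)$ time, form the $n\times n$ matrix $M_i$ in $O(m_i n^2)$ time, and diagonalize it in $O(n^3)$ time; the closing Gram--Schmidt costs $O(n^3)$. Using $L = O(\lg(2+m/n))$, $\sum_i m_i = O(m)$ and $Lk \le n/8$, the total is $O(mn^2 + n^3\lg(2+m/n)) = O(mn^2\lg(2+m/n))$ since $m \ge n$.

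The main obstacle --- and the quantitative heart of the argument --- is the spectral accounting in the second paragraph: one must see why the threshold $\tau$ may stay \emph{fixed} over all $L$ rounds rather than growing geometrically. This works only because the relevant trace bound $n\,\lambda_{k+1}(A_{T_i}^T A_{T_i})$ scales linearly with the number $m_i$ of rows currently under consideration, which halves each round, so at a constant $\tau$ the count of bad rows also halves. Making this precise needs two ingredients that are easy to overlook: that $\herdisc$ does not increase when rows are deleted, so Theorem~\ref{thm:disceigen} may be invoked on the shrinking row-submatrices $A_{T_i}$ rather than on $A$ itself; and the clean composition of the per-round projections ($P_{i+1} = P_i + V_i^T V_i$, and removing the top $k$ eigenvectors of $M_i$ drops its largest eigenvalue to $\lambda_{k+1}(M_i)$), which is what lets the induction close.
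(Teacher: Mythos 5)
Your proof is correct and follows essentially the same route as the paper: the same round-by-round peeling of the top $\approx n/(8\lg(8m/n))$ eigenvectors of the Gram matrix of the currently-largest rows, the same halving invariant at a fixed threshold $O(\herdisc(A)\lg(2m/n))$ justified via Theorem~\ref{thm:disceigen}, the same final absorption of the at most $n/8$ remaining large rows into $V$, and the same runtime accounting. The only notable difference is a welcome refinement of one step: you invoke Theorem~\ref{thm:disceigen} on the unprojected row submatrix $A_{T_i}$ and transfer the bound through the compression $(I-P_i)A_{T_i}^TA_{T_i}(I-P_i)$ via Courant--Fischer, whereas the paper applies the theorem to the projected submatrix $\bar{B}$ and appeals to $\herdisc(\bar{B})\leq\herdisc(A)$ as if $\bar{B}$ were a row submatrix of $A$ itself; your version handles that point more carefully.
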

We remark that a similar result could be derived from the $\gamma_2$ norm characterization of hereditary discrepancy~\cite{gamma2}. Concretely, it can be shown that a matrix $A$ has a factorization $A = BC$ so that all columns of $C$ have $\ell_2$ norm at most $1$ and all rows of $B$ have $\ell_2$ norm at  most $\gamma_2(A) = O(\herdisc(A) \lg \rank(A))$. Then $C^TC$ has at most $n/2$ eigenvalues larger than $2$. We could then let $V$ have the top $n/2$ eigenvectors of $C^TC$ has rows, implying that $C(I-V^TV)$ has operator norm at most $\sqrt{2}$ and therefore all rows of $A(I-V^TV) = BC(I-V^TV)$ have norm bounded by $\sqrt{2} \gamma_2(A) = O(\herdisc(A) \lg \rank(A))$. Computing the factorization can be done by solving an SDP, which would be slower than our approach.

Having computed the matrix $V$ in Theorem~\ref{thm:structure}, we now seek a ``random'' coloring $x$ in the orthogonal complement of the rows of $V$. Our basic approach for finding such a coloring, is to run the Edge-Walk algorithm by Lovett and Meka~\cite{lovett}. Here we start out with $x = 0$ and then repeatedly sample a random vector $g$ in the orthogonal complement and add it to $x$. When a coordinate $x_i$ reaches $1$, we add $e_i$ as a row of $V$, ensuring that all further $g$ have $g_i = 0$ and thus leaves the coordinate unchanged. Moreover, whenever $|\langle a_i, g \rangle|$ exceeds $O(\herdisc(A) \lg^{3/2}(8m/n))$, we add $a_i$ as a row of $V$, ensuring that $\langle a_i, g \rangle$ remains unchanged. This is similar in spirit to the Edge-Walk algorithm, with the key new observation being that the change in $|\langle a_i, g \rangle|$ is proportional to the length of the projection of $a_i$ onto the orthogonal complement of the rows of $V$, and thus proportional to $\herdisc(A)$. On top of this, we pay a logarithmic factor due to partial coloring, which is similar to most previous discrepancy minimization algorithms.

As a second technical contribution, we observe that the Edge-Walk algorithm would need $\Omega(n^2)$ steps of sampling a random $g$ (for the familiar reader, the Edge-Walk algorithm would need to add $e_i$ to $V$ when $|x_i|$ exceeds $1-1/n$ and this would require a small step size to avoid $|x_i|$ exceeding $1$). We carefully reduce this to only $O(n)$ steps by modifying the update performed in each step of the algorithm. The Edge-Walk algorithm always adds $\eta g$ to $x$ for an $\eta$ small enough that no coordinate of $x_i$ ever changes from less than $1-1/n$ in absolute value to above $1$. Then upon termination, each coordinate of $x$ may be rounded to either $-1$ or $+1$. This approach requires a very small step size $\eta$. What we do instead, is that we compute the largest possible $\mu>0$, such that adding $\mu g$ or $-\mu g$ to $x$ ensures that no coordinate $x_i$ exceeds $1$ in absolute value. We then add $\min\{\eta, \mu\}g$ to $x$. At a high level, this allows the Edge-Walk algorithm to temporarily violate $|x_i| \leq 1$, but when it happens, the step size is slightly reduced in that iteration, making $|x_i|=1$ instead. By a careful analysis, this reduces the number of iterations to just $O(n)$.

Let us finally comment on an alternative approach for obtaining a result similar to ours. In the work by Dadush et al.~\cite{balancing}, it is shown that running an algorithm by Rothvoss~\cite{rothvoss} with the convex body $K = \{x : \|Ax\|_\infty \leq 1\}$ can obtain a coloring whose discrepancy is bounded by the hereditary discrepancy of $A$. To run Rothvoss' algorithm, one needs to projet vectors onto $K$. This can be done by solving a linear program~\cite{Eldan2018EfficientAF}.

In Section~\ref{sec:prelim}, we introduce some inequalities and facts needed for the analysis of our algorithm and then proceed in Section~\ref{sec:algo} to present our new algorithm and its analysis.

\section{Preliminaries}
We will need a few inequalities regarding normal distributed random
variables for the analysis of our algorithm. The first is a standard
fact that we state without proof:
\label{sec:prelim}
\begin{claim}
  \label{claim:norm}
  Let $G \sim \Norm(0,1)$. Then, for any $\lambda > 0$, $\Pr[|G| \geq
  \lambda] \leq 2\exp(-\lambda^2/2)$.
\end{claim}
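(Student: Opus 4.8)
The statement is the standard sub-Gaussian tail bound for a single normal variable, so the plan is to give the usual Chernoff-style argument. First I would reduce to a one-sided estimate using symmetry: since $G$ and $-G$ have the same distribution, $\Pr[|G| \geq \lambda] = \Pr[G \geq \lambda] + \Pr[G \leq -\lambda] = 2\Pr[G \geq \lambda]$, so it suffices to show $\Pr[G \geq \lambda] \leq \exp(-\lambda^2/2)$ for every $\lambda > 0$.

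For the one-sided bound I would use the moment generating function. For any $t > 0$, Markov's inequality applied to the nonnegative random variable $e^{tG}$ gives $\Pr[G \geq \lambda] = \Pr[e^{tG} \geq e^{t\lambda}] \leq e^{-t\lambda}\,\E[e^{tG}]$. The only computation needed is the Gaussian MGF, $\E[e^{tG}] = \frac{1}{\sqrt{2\pi}}\int_{-\infty}^{\infty} e^{tx - x^2/2}\,dx = e^{t^2/2}$, which follows by completing the square $tx - x^2/2 = t^2/2 - (x-t)^2/2$ and using that the standard normal density integrates to $1$. Plugging this in yields $\Pr[G \geq \lambda] \leq \exp(t^2/2 - t\lambda)$ for all $t > 0$, and choosing $t = \lambda$ minimizes the exponent and gives $\Pr[G \geq \lambda] \leq \exp(-\lambda^2/2)$. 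Combining with the symmetry step proves the claim.

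There is essentially no obstacle here: the only ingredient with any content is the closed form of the Gaussian MGF, which is elementary. An alternative route I would consider is the direct tail-integral estimate $\Pr[G \geq \lambda] = \frac{1}{\sqrt{2\pi}}\int_\lambda^\infty e^{-x^2/2}\,dx \leq \frac{1}{\lambda\sqrt{2\pi}}\int_\lambda^\infty x\, e^{-x^2/2}\,dx = \frac{1}{\lambda\sqrt{2\pi}}\,e^{-\lambda^2/2}$ (using $x/\lambda \geq 1$ on the domain of integration), but this only beats the trivial bound for $\lambda$ bounded away from $0$ and would force a small case split, handling $\lambda \leq \sqrt{2\ln 2}$ via $\Pr[|G|\geq\lambda]\leq 1 \leq 2e^{-\lambda^2/2}$. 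Since the Chernoff argument avoids any case analysis and is uniform in $\lambda$, that is the version I would write down.
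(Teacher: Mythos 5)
Your proof is correct: the symmetry reduction, Markov's inequality applied to $e^{tG}$, the Gaussian MGF computation $\E[e^{tG}] = e^{t^2/2}$, and the choice $t=\lambda$ together give exactly the claimed bound $\Pr[|G|\geq\lambda]\leq 2\exp(-\lambda^2/2)$ for all $\lambda>0$. The paper states this claim as a standard fact without proof, so there is nothing to compare against; your Chernoff-style argument is the standard way to establish it and is fully valid (your remark about the alternative tail-integral route needing a case split for small $\lambda$ is also accurate).
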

Secondly, we need the following fact regarding the distribution of the
inner product between a vector $a$ and the projection of a vector with i.i.d.
normal distributed entries onto a subspace:
\begin{claim}
  \label{claim:ipprojnormal}
  Let $V \in \R^{\ell \times n}$ be a matrix with $\ell \leq n$ orthogonal
  unit length rows. Let $g \in \R^n$ be sampled with $n$
  i.i.d. $\Norm(0,1)$ distributed entries and let $a \in \R^n$ be an
  arbitrary vector. Then
  \[
    \langle a, (I-V^TV)g \rangle \sim \Norm(0,\|a(I-V^TV)\|^2).
  \]
\end{claim}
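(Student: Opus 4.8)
The plan is to reduce the statement to the one-dimensional fact that a fixed linear combination of i.i.d.\ standard Gaussians is again Gaussian. The first step is to record that $P := I - V^TV$ is an orthogonal projection. Since the rows of $V$ are orthonormal we have $VV^T = I_\ell$ (note: $VV^T$, not $V^TV$), hence $V^TV$ is symmetric with $(V^TV)^2 = V^T(VV^T)V = V^TV$, and therefore $P$ is symmetric with $P^2 = I - 2V^TV + V^TV V^TV = I - 2V^TV + V^TV = P$. So $P = P^T = P^2$, i.e. $P$ is the orthogonal projection onto the orthogonal complement of the row span of $V$.

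Next I would push the randomness into a single linear form. Writing $a$ as a column vector, $\langle a, (I-V^TV)g\rangle = a^T P g = (Pa)^T g = \langle Pa, g\rangle$, where the middle equality uses $P^T = P$. Thus $\langle a,(I-V^TV)g\rangle = \sum_{j=1}^n (Pa)_j\, g_j$ is a deterministic linear combination of the independent $\Norm(0,1)$ variables $g_1,\dots,g_n$. By the standard closure property of the Gaussian family, for any fixed $c \in \R^n$ we have $\sum_j c_j g_j \sim \Norm(0,\|c\|^2)$ — mean $0$ by linearity of expectation, variance $\sum_j c_j^2$ by independence. Taking $c = Pa$ gives $\langle a,(I-V^TV)g\rangle \sim \Norm(0,\|Pa\|^2)$.

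It remains to match the variance with the quantity in the statement: $\|Pa\|^2 = a^T P^T P a = a^T P a$, again using $P = P^T = P^2$, and this equals $\|a(I-V^TV)\|^2$ since the row vector $a(I-V^TV)$ has squared Euclidean norm $a P^T P a^T = a P a^T$. There is essentially no hard step here; the only point requiring care is the bookkeeping that it is $VV^T$ (not $V^TV$) which equals the identity, as this is precisely what makes $I - V^TV$ idempotent rather than merely symmetric. Once that is in place, the claim follows from the single-variable Gaussian fact.
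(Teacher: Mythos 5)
Your proof is correct, and it takes a slightly different route from the paper's. The paper argues geometrically: it picks an explicit orthonormal basis $u_1,\dots,u_{n-\ell}$ for the orthogonal complement of the rows of $V$, uses rotational invariance to write $g = \sum_i X_i v_i + \sum_j Y_j u_j$ with i.i.d.\ standard Gaussian coefficients, expands $a$ in the same basis, and reads off $\langle a,(I-V^TV)g\rangle = \sum_j \beta_j Y_j \sim \Norm(0,\sum_j \beta_j^2)$. You instead stay at the level of matrix algebra: you verify that $P = I - V^TV$ is symmetric and idempotent (correctly noting it is $VV^T = I_\ell$, not $V^TV$, that the orthonormality of the rows gives you), move $P$ onto $a$ via $\langle a, Pg\rangle = \langle Pa, g\rangle$, and invoke the one-dimensional fact that a fixed linear combination $\sum_j c_j g_j$ of i.i.d.\ standard normals is $\Norm(0,\|c\|^2)$, finishing with $\|Pa\|^2 = a^TPa = \|aP\|^2$. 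The two arguments rest on the same underlying Gaussian fact, but yours avoids introducing the complementary basis and the rotational-invariance step, which makes it a bit shorter and more mechanical; the paper's basis decomposition makes the geometry (only the component of $a$ in the complement matters) more visible and is the picture used elsewhere in the analysis. One small remark: idempotence is not strictly needed for the variance bookkeeping, since symmetry alone already gives $\|Pa\|^2 = a^TP^TPa = \|aP\|^2$; but using $P^2 = P$ as you do is perfectly fine and matches the cleanest reading of the claim.
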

A proof of Claim~\ref{claim:ipprojnormal} can be found in
Section~\ref{sec:boring}.

Finally, we need a version of Azuma's inequality for Martingales with
Subgaussian tails.
\paragraph{Azuma for Martingales with Subgaussian Tails.}
A sequence of random variables $Z_1,Z_2,\dots$ is called a
\emph{martingale difference sequence} with respect to another sequence
of random variables $X_1,X_2,\dots$, if for any $t$, $Z_{t+1}$ is
measurable wrt. the sigma algebra generated by $X_1,\dots,X_t$, and $\E[Z_{t+1} \mid X_1, \dots, X_t] =
0$ with probability $1$.

\begin{theorem}[Shamir~\cite{azuma}]
  \label{thm:azuma}
  Let $Z_1,\dots,Z_T$ be a martingale difference sequence with respect
  to a sequence $X_1,\dots,X_T$, and suppose there are constants
  $b>1$, $c>0$ such that for any $t$ and any $a > 0$, it holds
  that
  \[
    \max\{\Pr[Z_t > a\mid X_1,\dots,X_{t-1}], \Pr[Z_t < -a \mid
    X_1,\dots,X_{t-1}]\} \leq b \exp(-ca^2).
  \]
  Then for any $\delta > 0$, it holds with probability at least
  $1-\delta$ that
  \[
    \left|\frac{1}{T}\sum_{t=1}^T Z_t \right|\leq 2\sqrt{\frac{28 b \lg(1/\delta)}{cT}}.
  \]
\end{theorem}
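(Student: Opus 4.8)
The plan is the exponential-moment (Chernoff) method, which reduces the statement to a uniform bound on the \emph{conditional} moment generating function of each increment. Fix $s>0$; Markov's inequality applied to $\exp(s\sum_t Z_t)$ gives $\Pr[\sum_{t=1}^T Z_t > \tau] \le e^{-s\tau}\,\E[\exp(s\sum_{t=1}^T Z_t)]$. Peeling off the last factor, the partial product $\prod_{t=1}^{T-1} e^{sZ_t}$ is measurable with respect to $X_1,\dots,X_{T-1}$, so the tower property gives $\E[\exp(s\sum_{t\le T} Z_t)] = \E\big[\prod_{t\le T-1} e^{sZ_t}\cdot\E[e^{sZ_T}\mid X_1,\dots,X_{T-1}]\big]$. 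Hence, once one has a bound of the form $\E[e^{sZ_t}\mid X_1,\dots,X_{t-1}] \le \exp(\alpha b s^2/c)$ with $\alpha$ an absolute constant, iterating yields $\E[\exp(s\sum_{t\le T} Z_t)] \le \exp(\alpha b T s^2/c)$, and only an optimization over $s$ remains.

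The heart of the argument is this conditional MGF bound, which I would derive from the subgaussian tail. Conditioning on $X_1,\dots,X_{t-1}$, the hypothesis gives $\Pr[|Z_t|>a] \le 2b\,e^{-ca^2}$ for all $a>0$, and the layer-cake identity $\E[|Z_t|^{j}] = \int_0^\infty j a^{j-1}\Pr[|Z_t|>a]\,da$ together with a Gamma-function evaluation yields even moments $\E[Z_t^{2k}] \le C b\,k!/c^{k}$ for an absolute constant $C$. Expanding $e^{sZ_t}$ as a power series: the linear term vanishes since $\E[Z_t\mid X_1,\dots,X_{t-1}]=0$; the even part contributes at most $Cb\sum_{k\ge1}(s^2/c)^k/k! = Cb(e^{s^2/c}-1)$ using $k!/(2k)!\le 1/k!$; and the odd part, bounded by Cauchy--Schwarz through $|\E[Z_t^{2k+1}]|\le(\E[Z_t^{2k}]\E[Z_t^{2k+2}])^{1/2}$, begins at order $|s|^3$ and is summed the same way. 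For $|s|\le\sqrt c$ these series collapse to $O(b s^2/c)$, so $\E[e^{sZ_t}\mid X_1,\dots,X_{t-1}] \le 1 + O(bs^2/c) \le \exp(\alpha b s^2/c)$ with $\alpha$ absolute; for unrestricted $s$ the same expansion still gives the cruder estimate $\E[e^{sZ_t}\mid X_1,\dots,X_{t-1}] \le \exp(O(s^2/c) + \ln(Cb+1))$, i.e. a Gaussian-type bound spoiled by an additive $O(\ln b)$ per step.

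To finish, from $\E[\exp(s\sum_t Z_t)] \le \exp(\alpha b T s^2/c)$ (for $|s|\le\sqrt c$) and Markov, $\Pr[\tfrac1T\sum_t Z_t>\rho] \le \exp(-sT\rho + \alpha bTs^2/c)$; the unconstrained optimum is $s^\star=c\rho/(2\alpha b)$. When $s^\star\le\sqrt c$ --- the moderate-deviation regime, which covers the target $\rho=2\sqrt{28 b\lg(1/\delta)/(cT)}$ as long as $\lg(1/\delta)=O(bT)$ --- this gives $\Pr[\tfrac1T\sum_t Z_t>\rho] \le \exp(-c\rho^2 T/(4\alpha b))$. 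In the complementary large-deviation regime one instead uses the unrestricted (additive-offset) MGF bound and optimizes $s$ freely; the offset $O(T\ln b)$ is then dominated by the main term, which is large in this regime, so a comparable tail bound survives. Applying all of this to the martingale difference sequence $-Z_1,\dots,-Z_T$ as well (it satisfies the same symmetric hypothesis) and union-bounding the two tails gives $\Pr[\,|\tfrac1T\sum_t Z_t|>\rho\,] \le 2\exp(-\Theta(c\rho^2 T/b))$; setting the right-hand side to $\delta$ and solving for $\rho$ produces a bound $\Theta(\sqrt{b\lg(1/\delta)/(cT)})$, and the explicit constants $2$ and $28$ are exactly the slack needed to absorb $\alpha$, the optimization factors, the symmetrization factor $2$, and $\lg(2/\delta)$ versus $\lg(1/\delta)$.

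The step I expect to be the main obstacle is not a single inequality but the interplay caused by $b>1$: because such $Z_t$ need not be subgaussian near the origin, the clean bound $\E[e^{sZ_t}\mid X_1,\dots,X_{t-1}]\le e^{\alpha bs^2/c}$ is available only on the bounded range $|s|\le\Theta(\sqrt c)$, while the all-$s$ fallback carries an additive $O(\ln b)$ per step; so the Chernoff optimization must be carried out in two regimes and stitched together. Making those regimes overlap, and verifying that the constant $b$ propagates through the moment bounds, the MGF bounds, and both optimizations in exactly the way that lets the single stated expression $2\sqrt{28 b\lg(1/\delta)/(cT)}$ hold uniformly in $\delta$, is the genuinely delicate bookkeeping.
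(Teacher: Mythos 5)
You and the paper part ways at the very first step: the paper does not actually prove this theorem, it cites Shamir's one‑sided version (no absolute value, no leading factor $2$) and obtains the stated two‑sided form by applying that result to both $Z_1,\dots,Z_T$ and $-Z_1,\dots,-Z_T$ and union‑bounding, with the factor $2$ on the right‑hand side absorbing the passage from $\delta$ to $\delta/2$ inside the logarithm. You instead reprove the inequality from scratch by the standard Chernoff/MGF route --- layer‑cake gives $\E[Z_t^{2k}\mid X_1,\dots,X_{t-1}]\leq 2b\,k!/c^{k}$, the power series with the vanishing linear term and the Cauchy--Schwarz control of odd moments gives a conditional MGF bound, the tower property telescopes it, and one optimizes $s$ --- finishing with exactly the symmetrization the paper uses. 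This is essentially the argument inside the cited reference, all your structural steps check out, and what your version buys is self‑containment. What it costs is precisely what you flag yourself: the only nontrivial content of the statement beyond the generic $\Theta\bigl(\sqrt{b\lg(1/\delta)/(cT)}\bigr)$ is the explicit constant $2\sqrt{28}$, and you assert rather than verify that it survives the two‑regime stitching (the clean bound $\exp(\alpha b s^{2}/c)$ for $|s|\leq\sqrt{c}$ versus the additive $\ln(Cb+1)$‑per‑step fallback for larger $s$). The margins are in fact comfortable --- in the moderate regime $\alpha\approx 4(e-1)$ yields a per‑tail bound of roughly $\delta^{28/(\alpha\ln 2)}\approx\delta^{5.8}$, and in the large‑deviation regime the condition $\lg(1/\delta)\gtrsim bT$ makes $T\ln(Cb+1)$ a small fraction of the main term $28b\lg(1/\delta)/\alpha'$ --- but for a theorem whose statement is a specific numeric constant, that bookkeeping \emph{is} the proof, so leaving it as ``the slack needed'' is the one real gap; it is immaterial for the paper, which outsources the constant to the citation and contributes only the symmetrization step you also perform.
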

We remark that~\cite{azuma} proves the theorem without absolute values on the sum
of $Z_t$'s, and also without the factor $2$ on the right hand
side. However, defining $Z_i = -Z_i$ for all $i$, symmetry and a union
bound over the original martingale difference sequence and the negated
one implies the above.

\section{Algorithm}
\label{sec:algo}
In this section, we present our new algorithm with hereditary guarantees. As discussed in Section~\ref{sec:overview}, one of its key ingredients is a new structural property of matrices with low hereditary discrepancy (Theorem~\ref{thm:structure}). We start in Section~\ref{sec:structure} by proving Theorem~\ref{thm:structure}. We then proceed in Section~\ref{sec:hereditary} to present our algorithm for discrepancy minimization with hereditary guarantees. Throughout this section, we assume $m \geq n$. The case of $m < n$ can be reduced to $m = n$ via standard techniques in time $O(mn^2 + n^3)$. For the interested reader, we have included a sketch of this reduction in Section~\ref{sec:boring}. This reduction accounts for the $O(n^3)$ term in the running time stated in Theorem~\ref{thm:main}.

\subsection{Structure of Matrices with Low Hereditary Discrepancy}
\label{sec:structure}
The goal of this section is to prove Theorem~\ref{thm:structure}, i.e. to give an algorithm that on an $m \times n$ matrix $A$ (with $m \geq n$), outputs an $\ell \times n$ matrix $V$, with $\ell \leq n/4$, such that all rows of $V$ are orthogonal unit length vectors and all rows of $A(I-V^TV)$ have norm $O(\herdisc(A) \lg(2m/n))$. The algorithm is based on the ideas outlined in Section~\ref{sec:overview} and uses as a subroutine the simple algorithm presented as Algorithm~\ref{alg:orth}. Algorithm~\ref{alg:orth} takes as input an $\ell \times n$ matrix $V$ assumed to have unit length orthogonal rows, as well as another vector $s \in \R^n$. It then computes $s(I-V^TV)$, which is the projection of $s$ onto the orthogonal complement of the rows of $V$. If this vector is non-zero, it is scaled to unit length and added as a row of $V$. We will use Algorithm~\ref{alg:orth} as a subroutine to build matrices $V$ with orthogonal unit length rows. Concretely, invoking Algorithm~\ref{alg:orth} with an input vector $s$, guarantees that $s$ afterwards lies in the span of the rows of $V$ and that $V$ continues to have unit length orthogonal rows. It is nothing more than the Gram-Schmidt process.
\begin{algorithm}
  \DontPrintSemicolon
  \KwIn{Vector $s \in \R^{n}$, Matrix $V \in \R^{\ell \times n}$ with $\ell$ rows forming an orthonormal basis}
  \KwResult{Matrix $V' \in \R^{\ell' \times n}$ with at most $\ell+1$ rows forming an orthonormal basis}
  Let $s' = s(I-V^TV)$.

      \If{$s' \neq 0$}{
        Add $s'/\|s'\|$ as a row of $V$.
        }
  \Return{$V$}
        
  \caption{Orthogonalize}\label{alg:orth}
\end{algorithm}

Using Algorithm~\ref{alg:orth} as a subroutine, we are ready to present our algorithm for computing the matrix $V$ with the guarantees claimed in Theorem~\ref{thm:structure}. The algorithm is presented as Algorithm~\ref{alg:project}. As discussed in Section~\ref{sec:overview}, the idea is to build $V$ iteratively. In the $i$'th step, all but the $m/2^i$ rows of $A(I-V^TV)$ of largest norm are deleted. The eigenvalues and eigenvectors of the resulting matrix are then computed and the top $O(n/\lg(m/n))$ eigenvectors are added to $V$.

\begin{algorithm}
  \DontPrintSemicolon
  \KwIn{Matrix $A \in \R^{m \times n}$}
  \KwResult{Matrix $V \in \R^{\ell \times n}$ with $\ell \leq n/4$ rows forming an orthonormal basis}

  $V = 0 \in \R^{0 \times  n}$ \tcp*{Initially empty matrix}
  
  \For(){$i=1,\dots,\lg_2(8m/n)$}{

    Let $B =A(I-V^TV)$.

    Let $\bar{B}$ be the submatrix obtained from $B$ by deleting all but the $m/2^{i-1}$ rows of largest norm.

    Compute the eigenvalues $\mu_1 \geq \cdots \geq \mu_n \geq 0$ and corresponding eigenvectors $\eta_1,\dots,\eta_n$ of $\bar{B}^T\bar{B}$.

    Add $\{\eta_1,\dots,\eta_{n/(8 \lg_2(8m/n))} \}$ as rows of $V$.
    }
    Let $B =A(I-V^TV)$.

    Let $r_1,\dots,r_{n/8}$ be the $n/8$ rows of $B$ with largest norm.

    \For(){$j=1,\dots,n/8$}{
      Orthogonalize($r_j$,$V$) \tcp*{Add $r_j$ as row of $V$ via Algorithm~\ref{alg:orth}}
   }
  \Return{$V$}
        
  \caption{ProjectToSmallRows}\label{alg:project}
\end{algorithm}

We prove the following guarantees on the output of Algorithm~\ref{alg:project} and remark that Theorem~\ref{thm:structure} is an immediate corollary of Lemma~\ref{lem:project} and Lemma~\ref{lem:runtime} below.

\begin{lemma}
  \label{lem:project}
Let $V \in \R^{\ell \times n}$ be the matrix returned by Algorithm~\ref{alg:project} on input matrix $A \in \R^{m \times n}$ with $m \geq n$. Then $\ell \leq n/4$ and every row $a_i$ of $A(I-V^TV)$ has norm no more than $48 e \lg_2(8m/n) \herdisc(A)$.
\end{lemma}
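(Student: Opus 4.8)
The plan is to analyze the two guarantees separately: the bound $\ell \le n/4$ on the number of rows of $V$, and the bound on the row norms of $A(I-V^TV)$.

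For the size bound, I would simply count: the main loop runs $\lg_2(8m/n)$ times, and each iteration adds $n/(8\lg_2(8m/n))$ rows to $V$ via appending eigenvectors, contributing $n/8$ rows in total. The final loop adds at most $n/8$ more rows (Algorithm~\ref{alg:orth} adds at most one row per call, $n/8$ calls). Hence $\ell \le n/8 + n/8 = n/4$. One small point to check is that the eigenvectors $\eta_1,\dots,\eta_{n/(8\lg_2(8m/n))}$ added in a given iteration are orthonormal among themselves (true, as eigenvectors of the symmetric matrix $\bar B^T\bar B$, chosen orthonormal) \emph{and} orthogonal to the rows of $V$ already present. The latter holds because $\bar B$ is a submatrix of $B = A(I-V^TV)$, so every row of $\bar B$ lies in the orthogonal complement of the rowspace of $V$; thus $\bar B^T\bar B$ maps that complement into itself and annihilates the rowspace of $V$, so all its eigenvectors with nonzero eigenvalue lie in the complement — and for zero eigenvalue we may choose the eigenvectors to lie there as well. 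So $V$ retains orthonormal rows throughout, and likewise the Gram-Schmidt steps at the end preserve this.

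For the norm bound, I would set $H := \herdisc(A)$ and prove by induction on $i$ the invariant: \emph{at the start of iteration $i$, $B = A(I-V^TV)$ has at most $m/2^{i-1}$ rows of norm exceeding $24 e\lg_2(8m/n)\, H$} (the final constant $48e$ will come out with a little slack). Base case $i=1$ is trivial since $B$ has $m \le m/2^0$ rows. For the inductive step, let $\bar B$ be the $m/2^{i-1}$ largest-norm rows of $B$, and let $W$ denote $V$ after appending $\eta_1,\dots,\eta_k$ with $k = n/(8\lg_2(8m/n))$. The rows of $A(I-W^TW)$ equal the rows of $B$ further projected off $\eta_1,\dots,\eta_k$; for any row $r$ of $\bar B$ this projection is $r(I - \sum_{j\le k}\eta_j\eta_j^T)$, whose squared norm is $\sum_{j>k}\mu_j \langle \hat r, \text{(eigenbasis)}\rangle^2$-type expression — more usefully, the matrix $\bar B(I-\sum_{j\le k}\eta_j\eta_j^T)$ has $\bar B(I-\sum)^T \bar B(I-\sum)$ with all eigenvalues at most $\mu_{k+1}$. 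Now apply Theorem~\ref{thm:disceigen} to $\bar B$ (which, being a column-restriction... actually a row-restriction, of $A(I-V^TV)$, has $\herdisc(\bar B) \le \herdisc(A(I-V^TV)) \le \herdisc(A) = H$ — this monotonicity under deleting rows and under right-multiplication by a projection needs a sentence of justification): with $\bar m = m/2^{i-1}$ rows and parameter $k$, Theorem~\ref{thm:disceigen} gives $\mu_k \le (2eH)^2 \bar m n / k^2$, hence $\mu_{k+1} \le \mu_k \le (2eH)^2 \bar m n/k^2$. Plugging $k = n/(8\lg_2(8m/n))$ yields $\mu_{k+1} \le (2eH)^2 \bar m \cdot 64 \lg_2^2(8m/n)/n$. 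Therefore every row of $\bar B(I-\sum\eta_j\eta_j^T)$ — equivalently, every one of the top $\bar m$ rows of $A(I-W^TW)$ — has squared norm at most $\mu_{k+1}$, and the remaining rows only got shorter, so the number of rows of $A(I-W^TW)$ with squared norm exceeding $\mu_{k+1}$ is at most... hmm, that bounds \emph{all} the top-$\bar m$ rows, not $\bar m/2$ of them. To get the halving I instead bound $\mathrm{tr}(\bar B(I-\sum)^T\bar B(I-\sum))$: it is at most $n\mu_{k+1} \le (2eH)^2 \bar m \cdot 64\lg_2^2(8m/n)$, so by Markov at most $\bar m/2$ rows of $\bar B(I-\sum)$ have squared norm exceeding $(2eH)^2\cdot 128\lg_2^2(8m/n) = (16 e H \lg_2(8m/n))^2 \cdot \tfrac12$, i.e.\ norm exceeding $\approx 16 e\lg_2(8m/n) H$. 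Since the rows of $A(I-W^TW)$ outside the top $\bar m$ are no longer than the $\bar m$-th largest (which is itself bounded the same way after projection), the total count of long rows is at most $\bar m/2 = m/2^i$, completing the induction with the constant $16e$ (absorbed into $24e$, then $48e$ with room to spare).

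Finally, after the loop ($i$ has reached $\lg_2(8m/n)$, so $2^{i-1} = 4m/n$ hmm — let me recompute: the loop ends having done iteration $i = \lg_2(8m/n)$, after which the invariant-style bound gives at most $m/2^{\lg_2(8m/n)} = n/8$ rows of $B = A(I-V^TV)$ of norm exceeding $\approx 16e\lg_2(8m/n)H$). The last \texttt{for} loop then orthogonalizes those $\le n/8$ offending rows $r_1,\dots,r_{n/8}$ into $V$; after this, each $r_j$ lies in the rowspace of $V$, so the corresponding rows of $A(I-V^TV)$ become $0$, while all other rows already had norm $\le 16e\lg_2(8m/n)H$ and projecting further only shrinks them. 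Hence every row of the final $A(I-V^TV)$ has norm at most $16e\lg_2(8m/n)H \le 48e\lg_2(8m/n)\herdisc(A)$, as claimed.

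\textbf{Main obstacle.} The delicate point is the monotonicity claim $\herdisc(\bar B) \le \herdisc(A)$: $\bar B$ is obtained from $A$ by \emph{right}-multiplying by the projection $I - V^TV$ and then \emph{deleting rows}. Deleting rows clearly does not increase hereditary discrepancy (any column-submatrix of $\bar B$ is a row-sub-and-column-submatrix of $A(I-V^TV)$). Right-multiplication by a projection is the step needing care — one wants that $\herdisc(A(I-V^TV)) \le \herdisc(A)$, which should follow because for any $y$, $\|A(I-V^TV) y\|$ can be compared to $\|A y'\|$ for a suitable $y'$; the cleanest route is probably to note that the \emph{columns} of $A(I-V^TV)$ are images under the orthogonal projection $I-V^TV$ applied on the left to... no — $I - V^TV$ acts on $\R^n$, the column space. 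Concretely $A(I-V^TV)$ has the same column space structure as taking linear combinations of columns of $A$; one should argue via the characterization of hereditary discrepancy as a property preserved under the relevant linear maps, or simply re-run the trace argument from Section~\ref{sec:overview} directly without ever invoking $\herdisc(\bar B)$ — instead applying Theorem~\ref{thm:disceigen} to $A$ itself and using that $\bar B^T\bar B \preceq B^T B$ in Loewner order (since $\bar B$ is a row-submatrix of $B$) together with $B^T B = (I-V^TV)A^T A (I-V^TV)$, whose eigenvalues interlace those of $A^TA$. I would pursue this second route to sidestep the monotonicity subtlety entirely, at the cost of slightly more eigenvalue bookkeeping.
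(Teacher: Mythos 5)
Your overall skeleton matches the paper's: the $n/8+n/8$ count for the rows of $V$ (with the same orthonormality bookkeeping), an induction in which the number of rows of $A(I-V^TV)$ exceeding the threshold halves per iteration of the first loop, and a final Gram--Schmidt step that zeroes out the at most $n/8$ surviving long rows. Your trace-plus-Markov step is just the paper's averaging over an orthonormal basis of the orthogonal complement, and your constant slip (the Markov threshold comes out as $16\sqrt{2}\,e\lg_2(8m/n)\herdisc(A)$, not $16e\lg_2(8m/n)\herdisc(A)$) is harmless given the slack up to $48e$. So the strategy is right; the problem is the one step you yourself flag, namely how Theorem~\ref{thm:disceigen} gets invoked.

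The gap is that the repair you commit to (``apply Theorem~\ref{thm:disceigen} to $A$ itself, using $\bar{B}^T\bar{B}\preceq B^TB$ and interlacing of the eigenvalues of $B^TB=(I-V^TV)A^TA(I-V^TV)$ with those of $A^TA$'') throws away exactly the gain that the row-deletion step of Algorithm~\ref{alg:project} was designed to create. Writing $H=\herdisc(A)$ and $k=n/(8\lg_2(8m/n))$, that route only yields $\mu_k\le\lambda_k(A^TA)\le (2eH)^2 mn/k^2$, with $m$ rather than $\bar m=m/2^{i-1}$ in the bound; the trace of $(I-P_k)\bar{B}^T\bar{B}(I-P_k)$ is then only bounded by $256e^2H^2\lg_2^2(8m/n)\,m$, and the Markov step forces the threshold in iteration $i$ to grow like $2^{i/2}$, ending at $\Theta\bigl(\sqrt{m/n}\,\lg_2(8m/n)\,H\bigr)$ instead of $O(\lg_2(8m/n)H)$ --- so the lemma as stated is not proved once $m\gg n$. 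Your first route instead rests on $\herdisc(A(I-V^TV))\le\herdisc(A)$, which is not a standard fact and which you leave unproved (the paper itself words this step loosely, asserting $\bar{B}$ is a row-submatrix of $A$, so your suspicion was well placed). The correct patch keeps the row deletion: let $\bar{A}$ be the rows of $A$ corresponding to $\bar{B}$, so $\bar{B}=\bar{A}(I-V^TV)$. Then $\herdisc(\bar{A})\le\herdisc(A)$ is genuine row-deletion monotonicity, and $\mu_k=\lambda_k\bigl((I-V^TV)\bar{A}^T\bar{A}(I-V^TV)\bigr)\le\lambda_k(\bar{A}^T\bar{A})$, since with $M=\bar{A}^T\bar{A}$ and projection $P=I-V^TV$ the matrix $PMP$ has the same nonzero spectrum as $M^{1/2}PM^{1/2}\preceq M$. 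Applying Theorem~\ref{thm:disceigen} to $\bar{A}$, which has $\bar m=m/2^{i-1}$ rows, then gives $\mu_k\le(2eH)^2\bar m n/k^2$, which is exactly what your Markov step needs and is where your first draft should have landed.
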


\begin{proof}
We start by arguing that the rows of $V$ form an orthonormal basis. This is clearly true initially as $V$ has no rows. Now consider adding the eigenvectors $\{\eta_1,\dots,\eta_{n/(8 \lg_2(8m/n))} \}$ as rows of $V$ in line 6 of Algorithm~\ref{alg:project}. These are orthogonal to each other and have unit length. Moreover, they reside in the span of the rows of $\bar{B}$ and hence also in the span of the rows of $B$. Since all rows of $B$ are orthogonal to all rows in $V$ (since $B=A(I-V^TV)$), it follows that adding the eigenvectors to $V$ maintains that $V$ is an orthonormal basis. Finally, the last for-loop clearly preserves that the rows of $V$ form an orthonormal basis.

We now prove by induction that upon completing the $i$'th iteration of the first for-loop (lines $2$ to $6$), there are no more than $m/2^i$ rows in $A(I-V^TV)$ of norm exceeding $48 e \lg_2(8m/n) \herdisc(A)$. For $i=0$ this trivially holds. For the inductive step, let $i \in \{1,\dots,\lg_2(8m/n)\}$ and assume that after iteration $i-1$, there are no more than $m/2^{i-1}$ rows of norm more than $48 e \lg_2(8m/n) \herdisc(A)$. For the $i$'th iteration, this implies that all rows not in $\bar{B}$ have norm at most $48 e \lg_2(8m/n) \herdisc(A)$. Since the norm of any row of $A(I-V^TV)$ may only decrease by adding more rows to $V$, we conclude that this remains the case for rows not in $\bar{B}$. What is left to show, is that adding $\{\eta_1,\dots,\eta_{n/(8 \lg_2(8m/n))} \}$ as rows of $V$, leaves at most $m/2^i$ rows in $\bar{B}(I-V^TV) \in \R^{m/2^{i-1} \times n}$ with norm exceeding $48 e \lg_2(8m/n) \herdisc(A)$ (before adding the new rows to $V$ in line $6$ of iteration $i$, it holds that $\bar{B} = \bar{B}(I-V^TV)$, as all rows of $\bar{B}$ are already orthogonal to all rows of $V$).

Assume for the sake of contradiction that there are more than $m/2^i$ rows in $\bar{B}(I-V^TV)$ with norm exceeding $48 e \lg_2(8m/n) \herdisc(A)$ after adding $\{\eta_1,\dots,\eta_{n/(8 \lg_2(8m/n))} \}$ as rows of $V$ in line 6 of iteration $i$ of the first for-loop. Pick an arbitrary orthonormal basis $u_1,\dots,u_{n-\ell}$ for the orthogonal complement $C$ of the rows of $V$ (where $\ell$ is the number of rows of $V$). Then all rows $b_i$ of $\bar{B}(I-V^TV)$ lie in the span of $u_1,\dots,u_{n-\ell}$. It follows that $\sum_i \sum_j \langle b_i, u_j\rangle^2 = \sum_i \|b_i\|^2 > (m/2^i) (48 e \lg_2(8m/n) \herdisc(A))^2$. Averaging over all $u_j$, there must exists a $j$ with $\sum_i \langle b_i, u_j \rangle^2 > (m/2^i) (48 e \lg_2(8n/m) \herdisc(A))^2/n$. Thus we have a unit vector $v=u_j$ in $C$ with
\[
  \|\bar{B}v\|^2 > (m/2^i) (48 e \lg_2(8m/n) \herdisc(A))^2/n.
\]
Since $v$ is orthogonal to $\eta_1,\dots,\eta_{n/(8 \lg_2(8m/n))}$, we also have that $\|\bar{B}v\|^2 = v \bar{B}^T\bar{B} v \leq \mu_{n/(8 \lg_2(8m/n))}$. Combining the two yields
  \[
    \mu_{n/(8 \lg_2(8m/n))} > (m/2^i) (48 e \lg_2(8m/n) \herdisc(A))^2/n.
  \]
Now let $\bar{A}$ be the submatrix of $A$ obtained by deleting the same rows as when choosing $\bar{B}$. Then $\herdisc(A) \geq \herdisc(\bar{A})$. What remains is to relate $\herdisc(\bar{A})$ to the eigenvalues of $\bar{B}$. Recall that $\bar{B} = \bar{A}(I-V'^TV')$ where $V'$ is the matrix $V$ at the beginning of the loop iteration (before adding eigenvectors $\eta_j$ in step 6). We may assume that all the eigenvectors $\eta_j$ corresponding to non-zero eigenvalues of $\bar{B}^T\bar{B}$ are orthogonal to the rows of $V'$. Hence $\bar{B}\eta_j = \bar{A}\eta_j$. For any $k$, the vectors $\eta_1,\dots,\eta_k$ are thus orthogonal and have $\eta_j^T \bar{A}^T \bar{A} \eta_j \geq \mu_k$. It follows by the min-max characterization of eigenvalues that the $k$'th largest eigenvalue of $\bar{A}^T\bar{A}$ is at least $\mu_k$. Moreover, Theorem~\ref{thm:disceigen} gives us, with $k = n/(8 \lg_2(8m/n))$, that
  \[
    \herdisc(A) \geq \herdisc(\bar{A}) > \frac{n}{16e \lg_2(8m/n)} \sqrt{\frac{(m/2^i) (48 e \lg_2(8m/n) \herdisc(A))^2/n}{(m/2^{i-1})n}} = \sqrt{2} \herdisc(A),
  \]
  which is a contradiction.

  The induction proof implies that after completing the last iteration, $i = \lg_2(8m/n)$, of the first for-loop, there are no more than $m/2^i = n/8$ rows in $A(I-V^TV)$ of norm more than $48e \lg_2(8m/n) \herdisc(A)$. Since all such rows are in the span of the rows of $V$ after completing the second for-loop, we conclude that the algorithm has the claimed guarantees.
\end{proof}

\begin{lemma}
  \label{lem:runtime}
  Algorithm~\ref{alg:project} can be implemented to run in $O(mn^2\lg(2+m/n))$ time.
\end{lemma}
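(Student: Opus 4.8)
The plan is to walk through Algorithm~\ref{alg:project} line by line and account for the cost of each operation, keeping in mind that throughout the algorithm $V$ has at most $n/4$ rows, so every matrix-vector product with $I-V^TV$ costs $O(n^2)$ per vector (compute $Vx$ in $O(n\ell)=O(n^2)$ time, then $V^T(Vx)$ in $O(n\ell)=O(n^2)$ time). The dominant contributions should come from three places: (i) forming $B = A(I-V^TV)$ in each iteration of the first for-loop, (ii) computing the eigendecomposition of $\bar B^T\bar B$ in each iteration, and (iii) the final Gram-Schmidt loop.

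First I would handle the first for-loop. There are $\lg_2(8m/n) = O(\lg(2+m/n))$ iterations. In each iteration, forming $B = A(I-V^TV)$ means projecting all $m$ rows of $A$, which costs $O(mn^2)$; computing row norms and selecting the $m/2^{i-1}$ largest costs $O(mn)$; forming $\bar B^T\bar B$ costs $O((m/2^{i-1})n^2) = O(mn^2)$; and computing its eigendecomposition costs $O(n^3)$ by standard methods (or $O(n^\omega)$, but $O(n^3)$ suffices since $n^3 \le mn^2$ when $m\ge n$). Adding the top $n/(8\lg_2(8m/n))$ eigenvectors as rows of $V$ is free (they are already orthonormal and, as shown in the proof of Lemma~\ref{lem:project}, orthogonal to the existing rows of $V$). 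So each iteration costs $O(mn^2)$, and the whole first for-loop costs $O(mn^2\lg(2+m/n))$.

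Next I would handle everything after the first for-loop. Forming $B = A(I-V^TV)$ once more costs $O(mn^2)$; finding the $n/8$ rows of largest norm costs $O(mn)$. The second for-loop runs $n/8$ times, and each call to Algorithm~\ref{alg:orth} computes $s(I-V^TV)$ at cost $O(n^2)$ (using the current $V$, which still has $O(n)$ rows) and possibly appends one row; so the loop costs $O(n\cdot n^2) = O(n^3)$, which is again dominated by the $O(mn^2)$ term. Summing all contributions gives the claimed $O(mn^2\lg(2+m/n))$ bound.

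The main thing to be careful about — the only real obstacle — is ensuring that the number of rows of $V$ never exceeds $O(n)$ at any point, since otherwise the per-vector projection cost would blow up. This follows from Lemma~\ref{lem:project}'s bookkeeping: the first for-loop adds $n/(8\lg_2(8m/n))$ rows in each of its $\lg_2(8m/n)$ iterations, contributing $n/8$ rows total, and the second for-loop adds at most $n/8$ more, for a total of at most $n/4$ rows throughout. One should also note that maintaining $V$ explicitly (rather than, say, a projection matrix $I-V^TV$ of size $n\times n$, which would cost $O(n^2)$ per update to maintain and $O(n^2)$ per application — the same asymptotically) keeps all bookkeeping within the stated budget. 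With these observations in place the runtime analysis is entirely routine.
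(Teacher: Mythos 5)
Your proposal is correct and follows essentially the same line-by-line cost accounting as the paper's proof: $O(mn^2)$ per iteration of the first for-loop (projection, norms, $\bar{B}^T\bar{B}$, and an $O(n^3)\le O(mn^2)$ eigendecomposition) over $O(\lg(2+m/n))$ iterations, plus an $O(n^3)$ Gram--Schmidt phase dominated by $mn^2$ since $m\ge n$. Your explicit remark that $V$ never exceeds $n/4$ rows (so each projection costs $O(n^2)$) is a detail the paper leaves implicit, but it is the same argument.
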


\begin{proof}
In each step of the first for-loop, computing the matrix $B$ can be done in $O(mn^2)$ time. Constructing $\bar{B}$ takes $O(mn)$ time for computing norms (and $O(m)$ time for finding the largest $m/2^{i-1}$ of them). Computing $\bar{B}^T \bar{B}$ takes $O(n^2m/2^i)$ time and computing an eigendecomposition of $\bar{B}^T \bar{B}$ takes $O(n^3)$ time. There are $O(\lg(2 + m/n))$ iterations of the for-loop, for a total running time of $O((mn^2 + n^3) \lg(2 + m/n))$. The second for-loop has $O(n)$ iterations, each taking $O(n^2)$ time when calling Orthogonalize. Hence we conclude that the total running time is $O((mn^2 + n^3) \lg(2 + m/n)) = O(mn^2 \lg(2 + m/n))$.
\end{proof}

Lemma~\ref{lem:project} and Lemma~\ref{lem:runtime} together imply Theorem~\ref{thm:structure}.

\subsection{Discrepancy Minimization with Hereditary Guarantees}
\label{sec:hereditary}
In this section, we give our new algorithm for discrepancy minimization with hereditary guarantees. The algorithm follows the outline in Section~\ref{sec:overview}, combined with the \emph{partial coloring} technique. In more detail, given an input $m \times n$ matrix $A$, we initialize a coloring $x = 0 \in \R^n$. We then repeatedly update $x$ such that more and more entries become either $-1$ or $1$. In each iteration, the number of coordinates $x_i$ with $|x_i| < 1$ halves, while no coordinate ever exceeds $1$ in absolute value. While changing the coordinates of $x$, we ensure that $\|Ax\|_\infty$ changes by as little as possible.

To implement the above strategy, the algorithm presented further below as Algorithm~\ref{alg:partial} (PartialColoring), takes as input a matrix $A \in \R^{m \times n}$ and a partial coloring $x \in (-1,1)^n$. It then returns another vector $x' \in [-1,1]^n$ such that at least $n/2$ coordinates $x'_i$ are in $\{-1,1\}$, and moreover, $|\|Ax'\|_\infty - \|Ax\|_\infty| \leq O(\lg^{3/2}(2m/n) \cdot \herdisc(A))$. This is stated formally here:
\begin{theorem}
  \label{thm:partial}
For any input matrix $A \in \R^{m \times n}$ with $m \geq n$, and any partial coloring $x \in (-1,1)^n$, it holds with probability at least $0.08$ that Algorithm~\ref{alg:partial} (PartialColoring) on input $(A,x)$, returns a vector $x'$ for which $\|x'\|_\infty = 1$, $|\{i : |x'_i|=1\}| \geq n/2$ and for all rows $a_i$ of $A$, it holds that $|\langle a_i, x' \rangle - \langle a_i,x\rangle| \leq O(\lg^{3/2}(2m/n) \cdot \herdisc(A))$. With the remaining probability, Algorithm~\ref{alg:partial} returns \textbf{Failure}. Moreover, Algorithm~\ref{alg:partial} can be implemented to run in expected $O(mn^2 \lg(2 + m/n))$ time.
\end{theorem}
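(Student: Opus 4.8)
The plan is to implement the modified Edge-Walk strategy sketched in Section~\ref{sec:overview}. First, I would run Algorithm~\ref{alg:project} on $A$ to obtain the $(\ell\times n)$ matrix $V$ of Theorem~\ref{thm:structure}, with $\ell\le n/4$ orthonormal rows, such that every row of $A(I-V^TV)$ has norm $O(\lg(2m/n)\cdot\herdisc(A))$; this uses $O(mn^2\lg(2+m/n))$ time. For the coordinates already frozen at $\pm1$ in the input $x$, add the corresponding $e_i$ as rows of $V$ too. Then I would iterate: in each step sample a Gaussian $g'$ with i.i.d.\ $\Norm(0,1)$ entries, project it to $g=(I-V^TV)g'$, and move $x$ by $\min\{\eta,\mu\}\cdot(\pm g)$, where $\eta$ is a fixed step size and $\mu>0$ is the largest scalar keeping $\|x+\mu g\|_\infty\le1$ (the sign chosen, say, uniformly or to balance, so the increments form a martingale difference sequence). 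Whenever a coordinate $x_i$ hits $\pm1$, add $e_i$ to $V$ and freeze it; whenever the running inner product displacement $|\langle a_i,x\rangle-\langle a_i,x_{\mathrm{start}}\rangle|$ threatens to exceed the target bound, add $a_i/\|a_i(I-V^TV)\|\cdot$(projection) — i.e.\ orthogonalize $a_i$ into $V$ via Algorithm~\ref{alg:orth} — so that $\langle a_i,g\rangle=0$ thereafter. Stop once at least $n/2$ coordinates are frozen; if fewer than $n/2$ are frozen after the allotted number of steps, or some failure event occurs, return \textbf{Failure}.

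The analysis proceeds along three axes. (i) \emph{Coordinate freezing / number of steps.} By Claim~\ref{claim:ipprojnormal} applied to $a=e_i$, the increment to $x_i$ in one step is $\Norm(0,\sigma_i^2)$ with $\sigma_i^2\le1$ (before the $\mu$-truncation), so $\E[x_i^2]$ grows by a constant fraction of $\eta^2$ per step among the $\Omega(n)$ still-free coordinates; since $\sum_i x_i^2\le n$, after $O(n/\eta^2)$ steps the expected number of frozen coordinates is $\ge n/2$, and a Markov/second-moment argument makes this happen with constant probability. The key trick is that the $\mu$-truncation only \emph{helps} freeze coordinates (it forces some $|x_i|=1$ exactly) while never increasing $\sum x_i^2$ beyond $n$, and one argues it is invoked rarely enough that, with a suitable constant $\eta$ (not $1/n$), the total step count is $O(n)$ — this is the second technical contribution and the part I expect to require the most care. (ii) \emph{Subspace dimension stays bounded.} $V$ starts with $\le n/4$ rows; at most $n/2$ more are added from coordinate freezes before we stop; and the rows added for large inner products are bounded by a standard counting/potential argument — here Claim~\ref{claim:norm} and the $O(\lg(2m/n)\herdisc(A))$ row-norm bound from Theorem~\ref{thm:structure} control how often $|\langle a_i,g\rangle|$ is large, so with constant probability few constraints ever become tight. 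Thus $V$ never has more than, say, $(1-\Omega(1))n$ rows, so $g$ is drawn from a subspace of dimension $\Omega(n)$ throughout, keeping (i) valid. (iii) \emph{Discrepancy bound.} For a fixed row $a_i$, the sequence of increments $\langle a_i,\min\{\eta,\mu\}(\pm g)\rangle$ is a martingale difference sequence; each increment is sub-Gaussian with variance proxy $O(\eta^2\|a_i(I-V^TV)\|^2)=O(\eta^2\lg^2(2m/n)\herdisc^2(A))$ by Claim~\ref{claim:ipprojnormal} (the projection only shrinks the norm, and the $\mu$-truncation only shrinks the step), so over $T=O(n/\eta^2)$ steps Theorem~\ref{thm:azuma} gives $|\langle a_i,x'\rangle-\langle a_i,x\rangle|=O(\sqrt{T}\cdot\eta\lg(2m/n)\herdisc(A)\cdot\sqrt{\lg m})=O(\lg^{3/2}(2m/n)\herdisc(A))$ — using $\lg(1/\delta)=O(\lg m)$ after a union bound over the $m$ rows — except with small probability. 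Choosing the freezing threshold to match this bound makes the "large inner product" freezes provably rare, closing the loop with (ii).

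Finally, the failure probability: each of the three constant-probability good events (enough freezes, few tight constraints, all Azuma bounds hold) can be arranged to fail with probability at most, say, $0.3$ each, so by a union bound the algorithm succeeds with probability at least $0.08$; otherwise it reports \textbf{Failure}. For the running time: there are $O(n)$ steps; each step samples $g'$ ($O(n)$ time), computes $g=(I-V^TV)g'$ in $O(n^2)$ time, computes all $\langle a_i,g\rangle$ in $O(mn)$ time, and finds $\mu$ in $O(n)$ time; orthogonalizing a newly tight row via Algorithm~\ref{alg:orth} costs $O(n^2)$ and happens $O(n)$ times total; so the Edge-Walk phase is $O(mn^2)$, dominated by the $O(mn^2\lg(2+m/n))$ cost of computing $V$ via Theorem~\ref{thm:structure}. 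The main obstacle, as noted, is the careful accounting in (i) showing that the $\mu$-truncated walk freezes $\Omega(n)$ coordinates in only $O(n)$ steps with a constant step size, without the truncation events derailing either the freezing progress or the martingale variance bounds in (iii).
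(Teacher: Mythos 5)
Your overall architecture is the same as the paper's (compute $V$ via Algorithm~\ref{alg:project}, run a truncated Gaussian walk in the orthogonal complement, freeze coordinates and heavy rows by orthogonalizing $e_i$ resp.\ $a_i$ into $V$, bound per-row drift by Theorem~\ref{thm:azuma}, and use a second-moment-plus-Markov argument for the number of frozen coordinates), but the quantitative heart of the argument is inconsistent and as written does not give the stated bound. You take a ``constant'' step size $\eta$ with $T=O(n/\eta^2)$ iterations, and then assert that Azuma gives $O(\sqrt{T}\,\eta\,\lg(2m/n)\herdisc(A)\sqrt{\lg m})=O(\lg^{3/2}(2m/n)\herdisc(A))$. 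With your own $T$ this is false: $\sqrt{T}\,\eta=\sqrt{n}$, so what you actually get is $O(\sqrt{n\lg m}\,\lg(2m/n)\herdisc(A))$, off by a $\sqrt{n}$ factor. The miscount originates in axis (i): each step adds $\eta g$ where $g$ is a projected standard Gaussian with $\E\|g\|^2=\Theta(n-\dim V)=\Theta(n)$, so $\|x\|^2$ grows by $\Theta(\eta^2 n)$ per step, not $\Theta(\eta^2)$; hence only $T=\Theta(\eta^{-2})$ productive steps are needed (plus $O(n)$ slack for truncated steps), and the step size must be taken $\eta=\Theta\bigl(1/\sqrt{\max\{n,\ln(mn)\}}\bigr)$, not constant, both so that $T=O(n+\lg m)$ and so that a single step cannot overshoot the row threshold by more than the row norm (your analogue of the paper's line-14 failure event needs per-step increments that are $\Norm(0,\eta^2\|a_i(I-V^TV)\|^2)$ with $\eta^{-2}=\Omega(\ln(mn))$). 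With the correct accounting $\sqrt{T}\,\eta=O(1)$, the step size cancels in Azuma, and the per-row drift becomes $O(\|a_i(I-V^TV)\|\sqrt{\lg(m/n)})=O(\lg^{3/2}(2m/n)\herdisc(A))$; this cancellation is exactly what the paper's choices $\eps=(\max\{4\ln(mn)+20,\,256n\})^{-1/2}$ and $Q=16\eps^{-2}+256n$ are engineered to produce, and it is missing from your proposal.

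Two further gaps. First, you define $\mu$ one-sidedly (largest $\mu$ with $\|x+\mu g\|_\infty\le 1$) and then ``choose a sign uniformly or to balance.'' With a uniform sign and sign-dependent truncation, the increment is \emph{not} conditionally mean zero: the feasible step lengths $\mu_+$ and $\mu_-$ in the two directions differ, and averaging over the sign and over $g\leftrightarrow -g$ leaves a bias proportional to $\min\{\eps,\mu_+\}-\min\{\eps,\mu_-\}$, pointing away from the boundary, so the martingale hypothesis of Theorem~\ref{thm:azuma} fails. The paper avoids this precisely by defining $\mu$ symmetrically through $\max\{\|x+v+\mu g\|_\infty,\|x+v-\mu g\|_\infty\}=1$ and never flipping signs. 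Second, in (i) you only claim the expected number of frozen coordinates is at least $n/2$; Markov on the complement needs the expectation to exceed $n/2$ by a constant margin (the paper derives $\E[X]\ge 0.553n$, which yields success probability only about $0.1$), and consequently the other failure events must be made to occur with probability on the order of $10^{-2}$, not the $0.3$ each that your final union bound budgets. These latter points are fixable, but together with the step-size/step-count error the proposal does not yet establish the theorem.
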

Using Algorithm~\ref{alg:partial} as a subroutine, our final discrepancy minimization algorithm is presented as Algorithm~\ref{alg:hereditary} (\Minimize).

\begin{algorithm}
  \DontPrintSemicolon
  \KwIn{Matrix $A \in \R^{m \times n}$}
  \KwResult{Coloring $x \in \{-1,1\}^n$}
  
  Let $x \gets 0 \in \R^n$.
  
  \While{$x \notin \{-1,1\}^n$}{
    Let $S \subseteq [n]$ be the subset of indices $i$ such that
    $|x_i|<1$.

    Let $\bar{x} \gets x_S$ be the coordinates of $x$ indexed by $S$.

    Let $\bar{A} \gets A_S$ be the columns of $A$ indexed by $S$.

    Finished $\gets$ False
    
    \While{not Finished}{
      $x' \gets $PartialColoring($\bar{A},\bar{x}$). \tcp*{Call
        Algorithm~\ref{alg:partial}}
      \If{$x' \neq$\textbf{Failure}}{
        $x_S \gets x'$. \tcp*{Replace coordinates in $x$ indexed by $S$, by
      the new vector $x'$}
        Finished $\gets$ True
      }
    }

  }
  \Return{$x$}
        
  \caption{\Minimize}\label{alg:hereditary}
\end{algorithm}

Let us see that Algorithm~\ref{alg:hereditary} (\Minimize) gives the guarantees stated in our main result, Theorem~\ref{thm:main}. For the bound on the discrepancy of the returned coloring $x$, observe that Algorithm~\ref{alg:hereditary} has at most $\lg_2 n$ iterations of the outer while loop, since each iterations halves the number of entries $x_i$ with $|x_i| \notin \{-1,1\}$ (by Theorem~\ref{thm:partial}). In the $i$'th iteration, we have $|S| \leq n/2^i$. Thus replacing the entries in $x_S$ by $x'$ in line 10 increases $\|Ax'\|_\infty$ by at most an additive $O(\lg^{3/2}(m/|S|) \herdisc(\bar{A})) = O(\lg^{3/2}m \herdisc(A))$ compared to $\|Ax\|_\infty$ (here we use that $\herdisc(\bar{A}) \leq \herdisc(A)$). Thus the final coloring $x$ satisfies $\disc(A,x) = O(\lg n \cdot \lg^{3/2}m \cdot \herdisc(A))$ as claimed in Theorem~\ref{thm:main}.

As an interesting remark, notice that if we define $\herdisc(A,k)$ is the maximum discrepancy over all $m \times k$ submatrices of $A$, then the discrepancy of the returned coloring is bounded by
\[
  O\left(\sum_{i=0}^{\lg_2 n} \lg^{3/2}(m2^{i+1}/n) \cdot \herdisc(A,n/2^i)\right).
\]
This may be an improvement for some matrices $A$. For instance in the case of geometric set systems, it is often the case that $\herdisc(A,k)$ grows as a polynomial in $k$. In that case, the sum is dominated by the first term, $i=0$, and the discrepancy improves to $O(\lg^{3/2}(2m/n) \herdisc(A))$.

For the running time, notice that if we compute the set $S$ for each iteration of the outer while loop by examining only the coordinates $x_i$ with $i \in S$ from the previous iteration, then $S$ can be maintained in total time $O(n)$ throughout the execution of the algorithm ($|S|$ halves with each iteration). Extracting the submatrices $\bar{A}$ takes a total of $O(mn)$ time by the same argument. Finally, in the $i$'th iteration of the outer while loop, we have $|S| \leq n/2^i$. Hence by Theorem~\ref{thm:partial}, the call in line 8 to PartialColoring (Algorithm~\ref{alg:partial}) takes $O(m (n/2^i)^2\lg(2 + 2^i m/n)) = O(mn^2\lg(2 + m/n)2^{-i})$ time in expectation. Moreover, the expected number of times it needs to be called before not returning \textbf{Failure}, is at most $1/0.08 = O(1)$. Hence the total running time for all calls to PartialColoring is $O(mn^2 \lg(2 + m/n))$ in expectation.

What remains is thus to specify the PartialColoring algorithm (Algorithm~\ref{alg:partial}). As mentioned in Section~\ref{sec:overview}, the basic idea is to use the structural properties of matrices with low hereditary discrepancy in order to find a matrix $V$ for which $A(I-V^TV)$ has rows of small norm. From there on, we basically execute the Edge-Walk algorithm by Lovett and Meka, except that we optimized it a bit by taking large step sizes. This is accomodated by capping the magnitude of steps taken, such that no coordinate of $x_i$ ever exceeds $1$ in absolute value (see line 8-9). The algorithm is presented in details here:

\begin{algorithm}
  \DontPrintSemicolon
  \KwIn{Matrix $A \in \R^{m \times n}$\\
    \quad Partial coloring $x \in (-1,1)^n$}
  \KwResult{Vector $x \in \R^n$}

  $v \gets 0 \in \R^n$.

  $V \gets$ ProjectToSmallRows($A$) \tcp*{Call Algorithm~\ref{alg:project}}

  $\eta \gets \max_i \|a_i(I-V^TV)\|$ \tcp*{Largest norm of a row in
    $A(I-V^TV)$}

  $\eps \gets (\max\{4 \ln(mn) + 20,256n\})^{-1/2}$, $Q \gets 16\eps^{-2} + 256 n$, $\tau \gets 22 \eps \eta \sqrt{Q \lg(256m/n)}$
  
  \For(){$t=1,\dots,Q$}{
      
      Sample $g$ with $n$ i.i.d. $\Norm(0,1)$ entries.

      $g \gets g(I-V^TV)$. \tcp*{Sample $g$ in orthogonal complement
        of rows of $V$}

      Let $\mu > 0$ be maximal such that $\max\{\|x + v + \mu g\|_\infty, \|x + v -
      \mu g\|_\infty\} =1$.

      $g \gets \min\{\eps,\mu\} g$.

      \For(){every $i$ such that $|x_i + v_i + g_i| =1$ and $|x_i +
        v_i| < 1$}{
        Orthogonalize($e_i$,$V$) \tcp*{Add $e_i$ as row of $V$ via Algorithm~\ref{alg:orth}}
        }

        \For(){every row $a_i$ of $A$ such that $|\langle a_i,
          v+g\rangle| \geq \tau$ and $|\langle a_i,v\rangle| <
          \tau$}{
          \If(){$|\langle a_i, v+g \rangle| > \tau + \eta$}{
            \Return{\textbf{Failure}}
            }\Else{
          Orthogonalize($a_i$,$V$) \tcp*{Add $a_i$ as row of $V$ via
            Algorithm~\ref{alg:orth}}
          }
        }
        
        $v \gets v + g$.

        \If(){$|\{i : |x_i + v_i| = 1\}| \geq n/2$}{
              \Return{$x + v$}
        }
      }
    
  \Return{\textbf{Failure}}
        
  \caption{PartialColoring}\label{alg:partial}
\end{algorithm}

First we show that Algorithm~\ref{alg:partial}, on an input $x \in (-1,1)^n$, rarely fails in line 14:
\begin{lemma}
  \label{lem:smalldisc}
On any input $A \in \R^{m \times n}$ and any vector $x \in (-1,1)^n$, the probability that Algorithm~\ref{alg:partial} returns \textbf{Failure} in line 14 is at most $1/86$.
\end{lemma}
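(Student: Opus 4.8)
The plan is to fix an arbitrary row $a_i$ of $A$ and bound the probability that this particular row triggers a \textbf{Failure} in line~14, then take a union bound over the (at most $m$) rows. Failure for row $a_i$ requires that in some iteration $t$, we have $|\langle a_i, v\rangle| < \tau$ but $|\langle a_i, v+g\rangle| > \tau + \eta$; in particular this forces $|\langle a_i, g\rangle| > \eta$ in that step. The key structural fact is that once $a_i$ has been added as a row of $V$ (which happens as soon as $|\langle a_i, v\rangle|$ reaches $\tau$ without overshooting by more than $\eta$), all subsequent sampled $g$ are orthogonal to $a_i$, so $\langle a_i, v\rangle$ never changes again. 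Hence the relevant quantity $\langle a_i, v\rangle$ evolves as a sum of increments $\langle a_i, g_t\rangle$ that stops being updated the first time it would leave the interval $(-\tau-\eta, \tau+\eta)$ — up to the overshoot, it is a bounded martingale.

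The core estimate is on a single step: I would first argue that $\|a_i(I-V^TV)\| \le \eta$ always (since $V$ only grows, projections onto the orthogonal complement only shrink, and $\eta$ was defined in line~3 as the max row norm of $A(I-V^TV)$ for the \emph{initial} $V$ returned by ProjectToSmallRows). Conditioned on the history before step $t$, the increment $\langle a_i, g\rangle$ before capping is distributed as $\Norm(0, \|a_i(I-V^TV)\|^2)$ by Claim~\ref{claim:ipprojnormal}, and after the cap by $\min\{\eps,\mu\}$ its absolute value only decreases; so by Claim~\ref{claim:norm}, $\Pr[|\langle a_i, g\rangle| > \eta] \le 2\exp(-\eta^2/(2\eps^2\eta^2)) = 2\exp(-1/(2\eps^2))$. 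By the choice $\eps^{-2} \ge 4\ln(mn)+20$, this is at most $2\exp(-2\ln(mn)-10) \le 2(mn)^{-2}e^{-10}$, which is far below $1/(86m)$ even after multiplying by the number of steps $Q$ and the number of rows $m$. More precisely, I would union-bound over all $t \le Q$ and all $m$ rows: the probability that \emph{any} row ever has $|\langle a_i, g_t\rangle| > \eta$ is at most $mQ \cdot 2\exp(-1/(2\eps^2))$, and since $Q = 16\eps^{-2}+256n = \mathrm{poly}(m,n)$ while the exponential kills any polynomial, this is at most $1/86$. Since a line-14 failure for row $a_i$ requires $|\langle a_i, g_t\rangle| > \eta$ at the step where $\langle a_i, v\rangle$ first exits $(-\tau,\tau)$, the claim follows.

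The main obstacle — really the only subtle point — is making sure that $\eta$ remains a valid upper bound on $\|a_i(I-V^TV)\|$ throughout the execution, even as rows $e_j$ and $a_j$ get appended to $V$ inside the loop. This is where one must be careful: $\eta$ is computed once, from the output of ProjectToSmallRows, before any in-loop additions. The monotonicity argument is that $I - V^TV$ is the orthogonal projection onto the complement of $\vspan(\text{rows of }V)$, and enlarging the row set of $V$ can only enlarge that span, hence shrink the image of the projection, hence shrink $\|a_i(I-V^TV)\|$. So $\|a_i(I-V^TV)\| \le \eta$ holds for every $i$ at every step, and the per-step subgaussian tail bound with parameter $\eps\eta$ (from the cap $g \gets \min\{\eps,\mu\}g$) goes through. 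Everything else is a routine union bound using the concrete value of $\eps$.
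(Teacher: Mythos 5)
Your proposal is correct and takes essentially the same route as the paper's proof: reduce a line-14 failure for row $a_i$ to the event $|\langle a_i, g\rangle| > \eta$ in a step where $|\langle a_i, v\rangle| < \tau$, bound this per step by $2\exp(-\eps^{-2}/2) \leq 2(mn)^{-2}e^{-10}$ via Claims~\ref{claim:norm} and~\ref{claim:ipprojnormal} with variance at most $\eps^2\eta^2$, and union bound over all $Qm$ pairs of iterations and rows. Your explicit monotonicity argument that $\|a_i(I-V^TV)\| \leq \eta$ persists as $V$ grows is exactly the point the paper uses implicitly, so the two arguments coincide.
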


\begin{proof}
  We need to bound $\Pr[|\langle a_i, v + g \rangle|>\tau +  \eta]$. Examining Algorithm~\ref{alg:partial}, we observe that if $|\langle a_i, v + g \rangle| \geq \tau$ in line 12, we add $a_i$ as a row of $V$ or abort with \textbf{Failure}. If we add $a_i$ as a row, this implies that all subsequent sampled vectors $g$ will be orthogonal to $a_i$ due to line 7. Hence the only way it could happen that $|\langle a_i, v + g \rangle| \geq \tau + \eta$, is, if during some iteration of the for-loop, the final vector $g$ in line 9 results in $|\langle a_i, v + g \rangle| > \tau + \eta$ while $|\langle a_i, v\rangle| < \tau$ (from the previous iteration). By linearity of inner product, this can only happen if $|\langle a_i, g \rangle| > \eta$. Consider line 9 of Algorithm~\ref{alg:partial}. There we set $g$ to $\min\{\eps,\mu\} g$. Hence the probability that $|\langle a_i, g \rangle| > \eta$ is bounded by the probability that a vector $\tilde{g}$ sampled with $n$ i.i.d. $\Norm(0,1)$ entries and then projected to $\tilde{g} \gets \eps \tilde{g}(I - V^TV)$ satisfies $|\langle a_i, \tilde{g} \rangle| > \eta$. The distribution of $\langle a_i, \tilde{g} \rangle$ for such $\tilde{g}$ is $\Norm(0,\eps^2\|a_i(I-V^TV)\|^2)$ by Claim~\ref{claim:ipprojnormal}, i.e. normal distributed with mean $0$ and variance at most $\sigma^2 \leq \eps^2 \eta^2$. This is equal in distribution to $\sigma \Norm(0,1)$ and thus by Claim~\ref{claim:norm}, we have $\Pr[|\langle a_i, \tilde{g} \rangle| > \eta] \leq 2 \exp(-(\eta^2/\sigma^2)/2) \leq 2 \exp(-\eps^{-2}/2)$. We now define an event $E_{t,i}$ for every iteration $t$ of the for-loop and every row $i$ which occurs if the vector $g$ sampled in lines 6-7 during iteration $t$ and then scaled by $\eps$ satisfies $|\langle a_i, g \rangle| > \eta$. The event $E_{t,i}$ does not occur if Algorithm~\ref{alg:partial} already returned \textbf{Failure} in an iteration $t' < t$. By the previous arguments, we have $\Pr[E_{t,i}] \leq 2 \exp(-\eps^{-2}/2)$. Since $\eps \leq (4 \ln(mn) + 20)^{-1/2}$, this is less than $1/(nm)^3$. A union bound over all $E_{t,i}$ for $i=1,\dots,m$ and all $t=1,\dots,Q$ shows that the probability that Algorithm~\ref{alg:partial} returns a vector $x' = x + v$ such that there is a row $a_i$ with $|\langle a_i, x' \rangle| - |\langle a_i , x \rangle| > \tau + \eta$ is at most $Qm/((nm)^2e^{10}) \leq 1/86$.
\end{proof}

We next want to bound the probability that the algorithm fails in line 20. For this, we first need a handle on the number of rows $a_i$ that are added to $V$ in line 16:

\begin{lemma}
  \label{lem:violate}
  For any input  $A \in \R^{m \times n}$ and any vector $x \in (-1,1)^n$, let $R$ denote the random variable giving the number of rows $a_i$ added to $V$ in line 16 throughout the execution of Algorithm~\ref{alg:partial}. Then $\E[R] \leq n/256$.
\end{lemma}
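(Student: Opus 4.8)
The plan is to bound the expected number of rows added in line 16 by a second-moment / variance argument applied to the ``energy'' $\|A(I-V^TV)\|_F^2$, or more precisely to a carefully chosen potential that tracks how much the inner products $\langle a_i, v\rangle$ have moved. First I would set up the key observation: when a row $a_i$ is added to $V$ in line 16, it happens precisely because $|\langle a_i, v\rangle|$ has grown from $0$ (at $t=1$) past the threshold $\tau$ (without the ``too-big'' overshoot, which is handled by Lemma~\ref{lem:smalldisc}). So it suffices to bound the expected number of rows $a_i$ for which the martingale $\langle a_i, v\rangle$ ever reaches magnitude $\tau$. Since $\tau = 22\eps\eta\sqrt{Q\lg(256m/n)}$ is chosen to be a large multiple of the ``typical'' fluctuation, each such event is individually rare.

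Concretely, for a fixed row $a_i$, consider the sequence $Y_t := \langle a_i, g^{(t)}\rangle$ where $g^{(t)}$ is the scaled projected Gaussian added in iteration $t$ (before $a_i$ is frozen). By Claim~\ref{claim:ipprojnormal} each $Y_t$, conditioned on the past, is $\Norm(0,\sigma_t^2)$ with $\sigma_t^2 \le \eps^2\|a_i(I-V^TV)\|^2 \le \eps^2\eta^2$ — note the variance only uses $\min\{\eps,\mu\}\le\eps$ and the fact that projecting onto a smaller subspace only shrinks the norm. Thus $\langle a_i,v\rangle$ after all $Q$ steps is a martingale whose increments have subgaussian tails with parameters $b=2$, $c = 1/(2\eps^2\eta^2)$ in the language of Theorem~\ref{thm:azuma}; applying that theorem (or a direct maximal-inequality / Azuma bound, since we actually need the running maximum to stay below $\tau$, which follows because freezing $a_i$ only stops the walk) with $T=Q$ gives that the probability $|\langle a_i,v\rangle|$ ever reaches $\tau$ is at most something like $\exp(-\Omega(\tau^2/(\eps^2\eta^2 Q)))$. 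Plugging in $\tau^2 = 22^2\eps^2\eta^2 Q\lg(256m/n)$ makes this exponent $\Omega(\lg(256m/n))$, so the probability is at most $(n/(256m))^{c'}$ for a constant $c'$ one can push above $1$ by the numerical slack in the constant $22$. Summing over the $m$ rows then gives $\E[R] \le m\cdot (n/(256m)) = n/256$.

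The step I expect to be the main obstacle is justifying the maximal-inequality form cleanly: Theorem~\ref{thm:azuma} as stated controls only the final average $\frac1T\sum Z_t$, not $\max_s |\sum_{t\le s} Z_t|$, whereas ``$a_i$ is added in line 16'' is a hitting-time event. The clean way around this is to observe that once $|\langle a_i, v+g\rangle|\ge\tau$ the row is frozen, so the walk $\langle a_i,v\rangle$ is \emph{stopped} the first time it would exceed $\tau$; hence for the purpose of the bound we may pretend the martingale runs all $Q$ steps with the convention that it is constant after stopping, and the event ``$R$ counts $a_i$'' is contained in ``$|\langle a_i, v\rangle|$ at time $Q \ge \tau - \eta$'' up to the failure event of Lemma~\ref{lem:smalldisc} (the overshoot is at most $\eta$ by line 13, and $\tau-\eta \ge \tau/2$). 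Then a single application of Theorem~\ref{thm:azuma} to the stopped sequence, rescaled, suffices. A secondary point to be careful about is conditioning: the subgaussian tail bound on $Y_t$ must hold conditioned on all of $X_1,\dots,X_{t-1}$ (the previous Gaussians and hence the current $V$), which is exactly the setting of Claim~\ref{claim:ipprojnormal}, so this goes through. Finally I would double-check the numerical constants — the $256$ in $\E[R]\le n/256$ and in $\eps$, $Q$, $\tau$ — to confirm the exponent $c'$ indeed exceeds $1$, which is where the large constant $22$ in $\tau$ is spent.
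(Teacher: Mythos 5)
Your proposal follows essentially the same route as the paper's proof: once $|\langle a_i,v\rangle|$ reaches $\tau$ the row is frozen, so the hitting event reduces to the value of the stopped walk at time $Q$; then the subgaussian Azuma bound (Theorem~\ref{thm:azuma}) with $b=2$, $c=(\eps\eta)^{-2}/2$ and $\delta=n/(256m)$, plus linearity of expectation over the $m$ rows, gives $\E[R]\leq n/256$ exactly as in the paper. The one step you assert without justification is that the increments $\langle a_i,g_t\rangle$ have conditional mean zero: since $g_t$ is scaled by $\min\{\eps,\mu\}$ and $\mu$ depends on the sampled $g$, symmetry is not automatic; it holds precisely because line 8 defines $\mu$ via $\max\{\|x+v+\mu g\|_\infty,\|x+v-\mu g\|_\infty\}=1$, so that $\mu$ is the same for $g$ and $-g$ --- the paper explicitly flags this as the crucial reason for that symmetric definition, and with only the one-sided condition $\|x+v+\mu g\|_\infty=1$ the martingale property (and hence your appeal to Theorem~\ref{thm:azuma}) would fail.
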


\begin{proof}
Fix a row $a_i$. Then $a_i$ is added to $V$ only if $|\langle a_i, v \rangle|$ reaches $\tau$ during some iteration of the for-loop. Also, since we then add it to $V$, it will remain in $V$ for all remaining iterations. This implies that all subsequently sampled $g$'s will leave $\langle a_i, v \rangle$ unchanged.
Consider some iteration $t$ of the for-loop and let $g_t$ denote the vector formed in line 9 of Algorithm~\ref{alg:partial} in iteration $t$. If Algorithm~\ref{alg:partial} returns \textbf{Failure} before iteration $t$, define $g_t = 0$. Define also $v_t = \sum_{i=1}^t g_i$. Then $\langle a_i, v_t \rangle = \sum_{j=1}^t \langle a_i, g_j \rangle$. Since $\langle a_i, v\rangle$ remains unchanged if $a_i$ is added to $V$ or if we return \textbf{Failure}, it follows that $|\langle a_i, v_t \rangle| \geq \tau$ during some iteration $t$, only if $|\langle a_i, v_Q \rangle| \geq \tau$. Hence we bound the probability that $|\langle a_i, v_Q \rangle | \geq \tau$.

Observe now that if we condition on $v_{t-1},\dots,v_1$, and let $V_t$ denote the set $V$ at the beginning of iteration $t$ (which is completely determined from $v_{t-1},\dots,v_1$ and the input $A$, $x$), and then draw a vector $\tilde{g}$ with each coordinate i.i.d. $\Norm(0,1)$ distributed, then for any $t>0$, we have $\Pr[|\eps\langle a_i, \tilde{g}^T(I-V_{t}^TV_{t}) \rangle| > t] \geq \Pr[|\min\{\eps, \mu\}\langle a_i, \tilde{g}^T(I-V_{t}^TV_{t}) \rangle| > t] \geq \Pr[|\langle a_i, g_t \rangle| > t]$. We also have that $\eps\langle a_i, \tilde{g}^T(I-V_{t}^TV_{t}) \rangle$ is $\Norm(0, \eps^2 \|a_i(I-V_t^TV_t)\|^2)$ distributed. The variance $\sigma^2$ is at most $\eps^2 \eta^2$, hence by Claim~\ref{claim:norm}, it holds that $\Pr[|\langle a_i, g_t \rangle| > t] \leq \Pr[ |\eps\langle a_i, \tilde{g}(I-V_{t}^TV_{t}) \rangle| > t]  \leq 2\exp(-(\eps \eta)^{-2}t^2/2)$. Finally, observe that $\E[\langle a_i, g_t \rangle \mid v_{t-1},\dots,v_1] = 0$. This follows by symmetry of the distribution of $g_t$ and crucially uses that we define $\mu$ in line 8 to be such that  $\max\{\|x +v+\mu g\|_\infty, \|x + v - \mu g\|_\infty\}=1$. If we had only defined $\mu$ such that $\|x + v + \mu g\|_\infty = 1$ (we add $\mu g$ to $v$, not $-\mu g$), then the distribution of $g$ would not be symmetric.  Thus $\langle a_i, g_1 \rangle,\dots,\langle a_i, g_t\rangle$ forms a martingale difference sequence with respect to $v_1,\dots,v_t$, where the parameter $c$ in Theorem~\ref{thm:azuma} is at least $(\eps \eta)^{-2}/2$ and the parameter $b$ is $2$. Theorem~\ref{thm:azuma} implies that with probability at least $1-\delta$, we have
  \[
    \left| \langle a_i, v_Q \rangle\right| = \left|\sum_{t=1}^Q \langle a_i, g_t \rangle \right| \leq 2 \eps \eta \sqrt{ 112  Q \lg(1/\delta)} \leq 22 \eps \eta \sqrt{Q \lg(1/\delta)}. 
  \]
  Setting $\delta = n/(256 m)$, it follows by linearity of expectation that the expected number of rows added to $V$ throughout the execution of Algorithm~\ref{alg:partial} is $n/256$, i.e. $\E[R] \leq n/256$.
\end{proof}

With Lemma~\ref{lem:violate} established, we are finally ready to bound the probability of returning \textbf{Failure} in line 20:

\begin{lemma}
 \label{lem:fail}
  On any input  $A \in \R^{m \times n}$ and any vector $x \in (-1,1)^n$, the probability that Algorithm~\ref{alg:partial} returns \textbf{Failure} in line 20 is at most $9/10$.
\end{lemma}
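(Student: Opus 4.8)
The plan is to read a line-20 \textbf{Failure} as the event that the for-loop runs all $Q$ iterations without ever reaching $n/2$ frozen coordinates (line 18) and without ever aborting in line 14, and then to show this is unlikely because the walk, run for $Q$ steps, makes so much ``progress'' that it must have stopped earlier. Call iteration $t$ \emph{alive} if the algorithm has not returned or failed before it, and (as in the proof of Lemma~\ref{lem:violate}) let $g_t$ be the vector added to $v$ in iteration $t$, with $g_t=0$ if $t$ is not alive; a line-20 failure forces all $Q$ iterations to be alive, so it suffices to bound the expected number of alive iterations by roughly $Q/10$ and invoke Markov's inequality (the target $9/10$ is very generous).

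First I would set up a potential argument. Since line 8 picks $\mu$ so that $\max\{\|x+v+\mu g\|_\infty,\|x+v-\mu g\|_\infty\}=1$ and line 9 rescales by $\min\{\eps,\mu\}\le\mu$, convexity of $\|\cdot\|_\infty$ keeps $\|x+v\|_\infty\le 1$ throughout; more importantly, the symmetric choice of $\mu$ makes each $g_t$ an odd function of the underlying Gaussian, so $\E[g_t\mid\text{history}]=0$. Writing $v_t=\sum_{s\le t}g_s$ and expanding $n-\|x+v_Q\|^2$ as a telescoping sum, the cross terms $2\langle x+v_{t-1},g_t\rangle$ vanish in expectation, and nonnegativity of the left-hand side yields $\sum_{t=1}^{Q}\E[\|g_t\|^2]\le n$.

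Next I would split the alive iterations into \emph{truncated} ones ($\mu_t<\eps$, so $g_t=\mu_t h_t$ with $h_t$ the projected Gaussian of line 7) and \emph{full} ones ($\mu_t\ge\eps$, so $g_t=\eps h_t$), and bound each type. For a truncated iteration, the coordinate $i^*$ attaining $\max_i(|x_i+v_i|+\mu_t|h_{t,i}|)=1$ becomes frozen exactly when $h_{t,i^*}$ has the same sign as $x_{i^*}+v_{i^*}$; since the conditional law of $h_t$ is symmetric while $\mu_t$ and $i^*$ depend only on the coordinate magnitudes of $h_t$, this has conditional probability at least $1/2$, so (as at most $n$ coordinates ever freeze) the expected number of truncated alive iterations is at most $2n$. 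For a full iteration, $\|g_t\|^2=\eps^2\|h_t\|^2$, with $\E[\|h_t\|^2\mid\text{history}]$ equal to the dimension $d_t$ of the orthogonal complement of the rows of $V$; while alive, fewer than $n/2$ coordinates are frozen, \textsc{ProjectToSmallRows} contributes at most $n/4$ rows by Lemma~\ref{lem:project}, and by Lemma~\ref{lem:violate} the number $R$ of rows added in line 16 exceeds $n/8$ with probability at most $1/32$, so conditioned on $R\le n/8$ we get $d_t\ge n-n/4-n/2-n/8=n/8$ and hence $\E[\|g_t\|^2\mid\text{history}]=\Omega(\eps^2 n)$. Then $\sum_t\E[\|g_t\|^2]\le n$ forces $O(\eps^{-2})$ full alive iterations in expectation; adding the at most $2n\le\eps^{-2}/128$ truncated ones (using $\eps^{-2}\ge 256n$) and recalling $Q\ge 16\eps^{-2}$, the expected number of alive iterations is below $Q/10$ once the constants are tracked. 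Markov's inequality then bounds the probability that all $Q$ iterations are alive, conditioned on $R\le n/8$, by $1/10$, and adding $1/32$ for the complementary event gives the claimed $9/10$.

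The hard part will be the $\Omega(\eps^2 n)$ lower bound on full iterations: the cap $\mu_t$ truncates $h_t$ along coordinates $i$ with $1-|x_i+v_i|$ small, which both shrinks $\|h_t\|^2$ and correlates the event $\mu_t\ge\eps$ with it. The way I would resolve this is to note that since $\eps^{-1}\ge 16\sqrt n$, a coordinate causes truncation only when it is within $O(\sqrt{\log n}/\sqrt n)$ of $\pm1$, so on a full step the $\Omega(n)$ remaining directions are essentially unconstrained and still contribute $\Omega(n)$ to $\E[\|h_t\|^2]$; and if so many coordinates are that close to the boundary that even this fails, then $\mu_t\ge\eps$ is so improbable that the iteration contributes negligibly to the ``full'' count and is instead absorbed, as a truncated iteration, into the freezing bound. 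Turning this dichotomy into a clean quantitative estimate, together with the routine constant-chasing in the Markov step, is where most of the effort goes.
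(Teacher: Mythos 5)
Your route is genuinely different from the paper's. You bound the expected number of \emph{alive} iterations and apply Markov to the event that the walk survives all $Q$ steps, whereas the paper runs the same energy identity in the other direction: it lower-bounds $\E[\|x+v\|^2]$ after $Q$ steps, uses the pointwise cap $\|x+v\|_\infty\le 1$ to deduce $\E[X]\ge 0.553n$ for the number $X$ of frozen coordinates, and then applies Markov to $n-X$ to get $\Pr[X\ge n/2]>0.1$, so that line 19 triggers. The shared ingredients are identical: the symmetric choice of $\mu$ kills the cross terms, each truncated step freezes a coordinate with conditional probability at least $1/2$ (so truncated steps number at most $2n$ in expectation -- this is exactly the paper's $\sum_t\rho_t\le 2n$), and the dimension count $n/4$ (ProjectToSmallRows) $+$ frozen coordinates $+$ rows from line 16 controls $\dim(V_t)$. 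Your dual formulation is a perfectly viable alternative and arguably more intuitive ("the walk cannot afford $Q$ productive steps").

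However, the step you yourself flag is a real gap, and it is precisely where the paper's proof does its work. You need, per iteration, something like $\E[\mathbf{1}\{\text{alive, full}\}\,\|h_t\|^2]\ge c\,n\,\Pr[\text{alive, full}]$ despite the negative correlation between $\{\mu_t\ge\eps\}$ and $\|h_t\|^2$; your sketched dichotomy (coordinates within $O(\sqrt{\log n}/\sqrt n)$ of $\pm1$) is not a proof. The paper's own device closes it cleanly and you should borrow it: write $\E[F_t(1-Y_t)\|h_t\|^2]\ge \E[F_t\|h_t\|^2]-\E[Y_t\|h_t\|^2]$ and bound the subtracted term by Cauchy--Schwarz as $\sqrt{\rho_t\,\E\|\tilde g_t\|^4}\le\sqrt{2\rho_t}\,n$, discarding the at most $256n$ iterations with $\rho_t$ large (affordable since $Q=16\eps^{-2}+256n$); a $\chi^2$-concentration argument also works for large $n$ but degrades for small $n$, while Cauchy--Schwarz is uniform. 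Two further repairs: (i) you may not condition the per-step estimates on the terminal event $R\le n/8$, since that destroys the martingale property used to kill cross terms -- condition instead on the adapted event $R_t\le n/8$ (a line-20 failure with $R\le n/8$ keeps $R_t\le n/8$ throughout), or subtract $\E[R]\le n/256$ directly as the paper does; (ii) your constant claim is too optimistic: with $d_t\ge n/8$ the best you can extract is roughly $\E[\#\text{full alive}]\lesssim 8\eps^{-2}$, so the expected alive count is about $Q/2$, not $Q/10$ -- but since the target is the very generous $9/10$, Markov at $\approx 1/2$ plus $\Pr[R>n/8]\le 1/32$ still finishes the proof once the bookkeeping is done honestly.
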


\begin{proof}
  The basic idea in the proof is to show that $\E[\|x+v\|^2]$ is large after the $Q$ iterations and thus many coordinates of $x+v$ must reach $1$ in absolute value. With every iteration, we add either $\eps g$ or $\mu g$ to $v$. Since $\mu$ may be the minimum, and thus result in a smaller increase in the norm of $v$, we first bound the probability that $\mu$ is the minimum in line 9.
  Consider line 8-9 of Algorithm~\ref{alg:partial}. If $\mu < \eps$, we have that $\mu$ is maximal for at least one of the constraints $\|x + v +\mu g\|_\infty=1$ and $\|x + v - \mu g\|_\infty=1$. In particular, for line 10 of Algorithm~\ref{alg:partial}, this implies that at least one index $i$ must satisfy $|x_i + v_i + \sigma g|=1$ and $|x_i +v_i|<1$ for a sign $\sigma \in \{-1,1\}$. Let $\rho_t$ denote the probability that Algorithm~\ref{alg:partial} reaches iteration $t$ and that $\mu < \eps$ in line 9 during iteration $t$. Due to the symmetry of $g$ ($g$ and $-g$ are equally likely conditioned on $V$), we get that the expected number of indices $i$ such that $|x_i + v_i + \mu g|=1$ (in iteration $t$) and $|x_i + v_i| < 1$, is at least $\rho_t/2$. Thus the expected number of vectors $e_i$ added to $V$ in line 10-11 throughout the execution of Algorithm~\ref{alg:partial} is at least $\sum_{t=1}^Q \rho_t/2$. Let $X_t$ give the number of $e_i$'s added to $V$ during iteration $t$. Then $\E[\sum_t X_t] \geq \sum_t \rho_t/2$. On the other hand, the sum of $X_t$'s can never exceed $n$, thus we conclude $\sum_t \rho_t \leq 2n$.

Next, define $g_t$ as the vector $g$ added to $v$ in iteration $t$ of Algorithm~\ref{alg:partial} (define it to $g_t = 0$ if the algorithm terminates with \textbf{Failure} before iteration $t$). Define $v = \sum_{i=1}^Q g_t$. If Algorithm~\ref{alg:partial} does not fail, then $v$ is the returned vector. We have:
  \begin{eqnarray*}
    \E[\|x + v\|^2] &=& \E\left[ \left\| x + \sum_{t=1}^Q g_t \right\|^2 \right] \\
    &=& \|x\|^2 + \sum_{i=1}^Q \E[\|g_t\|^2].
  \end{eqnarray*}
  Here the second equality holds because $\E[g_t \mid g_1,\dots, g_{t-1}] = 0$ by symmetry of the distribution of $g_t$. To bound $\E[\|g_t\|^2]$, let $V_t$ be the random variable giving the matrix $V$ at the beginning of iteration $t$ of the for-loop. If the algorithm fails to reach iteration $t$, we define $V_t$ to equal $V$ at the end of the last reached iteration. Let $\tilde{g}_t$ be the random variable giving the value of $g$ during line 6 of iteration $t$, i.e. before projection and scaling by $\min\{\eps,\mu\}$. If the algorithm terminates before iteration $t$, we still define $\tilde{g}_t$ as a random variable with $n$ i.i.d. $\Norm(0,1)$ distributed entries. Define $F_t$ as an indicator taking the value $1$ if the algorithm reaches iteration $t$. Define $Y_t$ as the indicator taking the value $1$ if the algorithm reaches iteration $t$ and $\mu < \eps$ in line 9 during iteration $t$. Then

  \begin{eqnarray*}
    \E[\|g_t\|^2] &=& \E[F_t Y_t \mu^2\|\tilde{g}_t(I-V_t^TV_t)\|^2 + F_t(1-Y_t)\eps^2 \|\tilde{g}_t(I-V_t^TV_t)\|^2] \\
                  &\geq& \E[F_t(1-Y_t)\eps^2 \|\tilde{g}_t(I-V_t^TV_t)\|^2] \\
                  &\geq& \eps^2 \E[ F_t \|\tilde{g}_t(I-V_t^TV_t)\|^2] - \eps^2 \E[Y_t\|\tilde{g}_t(I-V_t^TV_t)\|^2].
  \end{eqnarray*}
  By Cauchy-Schwartz, we have $\E[Y_t \|\tilde{g}_t(I-V_t^TV_t)\|^2] \leq \sqrt{\E[Y_t^2]\E[\|\tilde{g}_t(I-V_t^TV_t)\|^4]}$. Since $Y_t$ is an indicator, we have $\E[Y_t^2] = E[Y_t]$. We also have $\|\tilde{g}_t\| \geq \|\tilde{g}_t(I-V_t^TV_t)\|$. Therefore, it holds that $\E[Y_t \|\tilde{g}_t(I-V_t^TV_t)\|^2] \leq \sqrt{\E[Y_t] \E[\|\tilde{g}_t\|^4]} = \sqrt{\rho_t \E[\|\tilde{g}_t\|^4]}$. Let $c_i$ denote the $i$'th coordinate of $\tilde{g}_t$. We have that the $c_i$'s are i.i.d. $\Norm(0,1)$ distributed. By linearity of expectation:
\begin{eqnarray*}
  \E[\|\tilde{g}_t\|^4] &=& \E\left[ \left(\sum_{i=1}^n c_i^2\right)^2\right] \\
    &=& \sum_{i=1}^n \sum_{j=1}^n \E[ c_i^2 c_j^2].
 \end{eqnarray*}
 For $i \neq j$, we have $\E[c_i^2 c_j^2] = \E[c_i^2]\E[c_j^2] = 1$ by independence. For $i=j$, we have $\E[c_i^4] = 3$ (by well-known bounds on the $4$'th moment of the normal distribution). Thus $\E[\|\tilde{g}_t\|^4] \leq n(n-1) + 3n \leq 2n^2$. We thus have $ \E[Y_t\|\tilde{g}_t(I-V_t^TV_t)\|^2] \leq \sqrt{2 \rho_t} n$.

 At the same time, we have $\E[F_t \|\tilde{g}_t(I-V_t^TV_t)\|^2]  = \Pr[F_t=1] \E[\|\tilde{g}_t(I-V_t^TV_t)\|^2]= \Pr[F_t=1] (\E[n - \dim(V_t)]) = \Pr[F_t=1](n - \E[\dim(V_t)])$. Now recall the definition of the random variable $R$ as the number of rows added to $V$ in line 16 throughout the execution of Algorithm~\ref{alg:partial}. Define also $X$ as the random variable giving the total number of $e_i$'s added to $V$ throughout the execution of Algorithm~\ref{alg:partial}. Then $\dim(V_t) \leq R+X+n/4$ for all $t=1,\dots,Q$. Hence $\E[\dim(V_t)] \leq \E[R] + \E[X] + n/4$. By Lemma~\ref{lem:violate}, we have $\E[R] \leq n/256$, thus $\E[\dim(V_t)] \leq (65/256)n + \E[X]$. Thus we conclude $\E[\|g_t\|^2] \geq \Pr[F_t=1]\eps^2(n-(65/256)n - \E[X]) - \eps^2 \sqrt{2 \rho_t} n$. By Lemma~\ref{lem:smalldisc}, we have $\Pr[F_t=1] \geq 85/86$. Hence $\E[\|g_t\|^2] \geq (85/86) \eps^2(n-(65/256)n - \E[X]) - \eps^2 \sqrt{2 \rho_t} n$.

Assume for now that $\rho_t \leq 1/128$. Then this is at least $(85/86)\eps^2(n - (65/256)n - \E[X] - (86/85)n/8) \geq (85/86)\eps^2(n-(65/256) n - \E[X] - (33/256)n) = (85/86)\eps^2((158/256)n - \E[X])$. If $\rho_t > 1/128$, we simply lower bound $\E[\|g_t\|^2]$ by $0$. Since $\sum_t \rho_t \leq 2n$, we can have at most $256 n$ iterations $t$ for which $\rho_t \geq 1/128$. For the remaining $Q-256n$ iterations, we have $\E[\|g_t\|^2] \geq (85/86)\eps^2((158/256) n - \E[X])$. Therefore $\E[\|x+v\|^2] \geq \|x\|^2 + (Q-256n)(85/86)\eps^2((158/256) n - \E[X]) \geq (Q-256n)(85/86)\eps^2((158/256) n - \E[X]) = 16(85/86) ((158/256) n - \E[X])$. Notice now that Algorithm~\ref{alg:partial} can never produce an output $v$ for which $\|x + v\|_\infty > 1$. Hence it must be the case that $\E[\|x + v\|^2] \leq n$. This implies $n \geq 16 (85/86)((158/256) n - \E[X])$. Rewriting in terms of $\E[X]$, we get $\E[X] \geq (158/256 - (86/85)(1/16))n > 0.553n$. Now define $Z = n - X$. Then $Z$ is a non-negative random variable and $\E[Z] = n - \E[X] \leq 0.447n$. By Markov's inequality, we have $\Pr[Z > n/2] < 2 \cdot 0.447 = 0.894 < 0.9$. Conversely, $\Pr[Z \leq n/2] > 0.1$. This implies $\Pr[X \geq n/2] > 0.1$. When $X \geq n/2$, Algorithm~\ref{alg:partial} is guaranteed to terminate in line 19 and thus we conclude that the probability of returning failure in line 20 is at most $9/10$.
\end{proof}

With these lemmas established, we may finally conclude the proof of Theorem~\ref{thm:partial}.

\begin{proof}[Proof of Theorem~\ref{thm:partial}]
  By Lemma~\ref{lem:smalldisc}, Lemma~\ref{lem:fail} and a union bound, it holds with probability at least $1/10 - 1/86 > 0.08$ that Algorithm~\ref{alg:partial} does not return \textbf{Failure}, and at the same time, every row $a_i$ satisfies $|\langle a_i, x' \rangle - \langle a_i, x \rangle| = |\langle a_i, x'-x\rangle| = |\langle a_i, v\rangle| \leq \tau + \eta$, where $x'$ is the returned vector. We have $Q = O(n + \max\{\ln(nm), n\}) = O(n + \ln m)$ and $\eps = O(\min\{\ln(nm)^{-1/2}, n^{-1/2}\})$. Hence
  \[
    \tau = O(\eta \min\{\ln(nm)^{-1/2}, n^{-1/2}\} \sqrt{(n + \ln m)} \sqrt{\lg(2m/n)}) = O(\eta \sqrt{\lg(2m/n)}).
  \]
The conclusion follows by invoking Lemma~\ref{lem:project} to conclude that $\eta = O(\lg(2m/n) \herdisc(A))$.

  For the running time,  notice that the call to Algorithm~\ref{alg:project} in line 2 takes $O(mn^2\lg(2 + m/n))$ time by Lemma~\ref{lem:runtime}. Line 3 takes $O(mn^2)$ time. There are $Q = O(\eps^{-2} + n) = O(\max\{\ln(mn), n\} + n) = O(n + \lg m)$ iterations of the for-loop. In each iteration, line 6-7 takes $O(n^2)$ time. Line 8 takes $O(mn)$ time. Checking the conditions in line 10, 12 and 18 takes $O(mn)$ time per iteration. Thus ignoring the calls to Orthogonalize, each iteration of the for-loop takes $O(mn + n^2) = O(mn)$ time. Given that there are $O(n + \lg m)$ iterations, the total running time is $O(mn(n + \lg m))$.  The calls to Orthogonalize in line 11 and 16 are executed a total of $O(n)$ times in expectation (by Lemma~\ref{lem:violate}), and any such call takes $O(n^2)$ time. This adds $n^3$ to the running time, which is dominated by $mn^2$ for $m \geq n$. The total running time is thus $O(mn(n + \lg m) + mn^2\lg(2 + m/n)) = O(mn^2 \lg(2 + m/n))$ in expectation.
\end{proof}

\section{Experiments}
\label{sec:experiments}
In this section, we present a number of experiments to test the practical performance of our discrepancy minimization algorithm. We denote the algorithm by \Minimize in the following.  We compare it to two base line algorithms \Sample and \SampleMany. \Sample simply picks a uniform random $\{-1,+1\}$ vector as its coloring. \SampleMany repeatedly samples a uniform random $\{-1,+1\}$ vector and runs for the same amount of time as \Minimize. It returns the best vector found within the time limit. For \Sample, we ran multiple experiments and took the median result (in fact, we use the median across all the runs of \SampleMany when stating the discrepancy of \Sample).

The algorithms were implemented in Python, using NumPy and SciPy for linear algebra operations. All tests were run on a MacBook Pro (16-inch, 2019) running macOS Monterey 12.4. The machine has a 2.6 GHz 6-Core Intel Core i7 and 16GB 2667 MHz DDR4 RAM.

We tested the algorithms on three different classes of matrices: 
\begin{itemize}
\item \textbf{Uniform} matrices: Each entry is uniform random and independently chosen among $-1$ and $+1$.
\item \textbf{2D Corner} matrices: Obtained by sampling two sets $P=\{p_1,\dots,p_n\}$ and $Q = \{q_1,\dots,q_m\}$ of $n$ and $m$ points in the plane, respectively. The points are sampled uniformly in the $[0,1] \times [0,1]$ unit square. The resulting matrix has one column per point $p_j \in P$ and one row per point $q_i \in Q$. The entry $(i,j)$ is 1 if $p_j$ is \emph{dominated} by $q_i$, i.e. $q_i.x > p_j.x$ and $q_i.y > p_j.y$ and it is $0$ otherwise. Such matrices are known to have hereditary discrepancy $O(\lg^{2.5}n)$~\cite{srinivasan,larsenDisc}.
\item \textbf{2D Halfspace} matrices: Obtained by sampling a set $P = \{p_1,\dots,p_n\}$ of $n$ points in the unit square $[0,1] \times [0,1]$, and a set $Q$ of $m$ halfspace. Each halfspace in $Q$ is sampled by picking one point $a$ uniformly on either the left boundary of the unit square or on the top boundary, and another point $b$ uniformly on either the right boundary or the bottom boundary of the unit square. The halfspace is then chosen uniformly to be either everything above the line through $a,b$ or everything below it. The resulting matrix has one column per point $p_j \in P$ and one row per halfspace $h_i \in Q$. The entry $(i,j)$ is $1$ if $p_j$ is in the halfspace $h_i$ and it is $0$ otherwise. Such matrices are known to have hereditary discrepancy $O(n^{1/4})$~\cite{matousek:half}.
\end{itemize}

The running times of the algorithms varied exclusively with the matrix size and not the type of matrix, thus we only show one time column which is representative of all types of matrices. The results are shown in Table~\ref{tbl}.
\begin{center}
\begin{table}[h]

\centering
\begin{tabular}{|c|c|r|r|r|r|}
\hline
Algorithm & Matrix Size & Disc Uniform & Disc 2D Corner & Disc 2D Halfspace & Time (s)\\
\hline
\Minimize & $200 \times 200$ & 24 & 3 & 4 & $< 1$\\
\Sample & $200 \times 200$ & 42 & 17 & 19 & $<1$\\
\SampleMany & $200 \times 200$ & 28 & 6 & 6 & $<1$\\
\hline
\Minimize & $1000 \times 1000$ & 56 & 6 & 8 & 24\\
\Sample &  $1000 \times 1000$ & 108 & 40 & 47 & $<1$\\
\SampleMany &  $1000 \times 1000$ & 80 & 15 & 18 & 24\\
\hline
\Minimize & $4000 \times 4000$ & 122 & 8 & 11 & 1637\\
\Sample &  $4000 \times 4000$ & 238 & 84 & 99 & $<1$\\
\SampleMany &  $4000 \times 4000$ & 186 & 32 & 36 & 1637\\
\hline
\Minimize & $10000 \times 2000$ & 124 & 9 & 11 & 1479\\
\Sample &  $10000 \times 2000$ & 178 & 61 & 71 & $<1$\\
\SampleMany &  $10000 \times 2000$ & 142 & 25 & 29 & 1479\\
\hline
\end{tabular}
\caption{Results of experiments with our \Minimize algorithm. The Matrix Size column gives the size $m \times n$ of the input matrix. The Disc columns show $\disc(A,x) = \|Ax\|_\infty$ for the coloring $x$ found by the algorithm on the given type of matrix. Time is measured in seconds.}
\label{tbl}
\end{table}
\end{center}
The table clearly shows that \Minimize gives superior colorings for all types of matrices and all sizes. The tendency is particularly clear on the structured matrices \textbf{2D Corner} and \textbf{2D Halfspace} where the coloring found by \Minimize on $4000 \times 4000$ matrices is a factor roughly 10 smaller than a single round of random sampling (\Sample) and a factor roughly 4 better than random sampling for as long time as \Minimize runs (\SampleMany).

\section{Deferred Proofs}
\label{sec:boring}
In this section, we give the deferred proof from
Section~\ref{sec:prelim} and also the reduction from $n > m$ to $m
\geq n$ case of discrepancy minimization.

\begin{proof}[Proof of Claim~\ref{claim:ipprojnormal}]
 Let $v_1,\dots,v_\ell$ denote the rows of $V$ and let
$u_1,\dots,u_{n-\ell}$ be an orthonormal basis for the orthogonal
complement of $v_1,\dots,v_\ell$. By the rotational invariance of the gaussian distribution, we have
 that $g$ is distributed as $\sum_{i=1}^\ell X_i v_i +
 \sum_{j=1}^{n-\ell} Y_j u_j$ where all $X_i$ and $Y_i$ are
 independent $\Norm(0,1)$ random variables. Write $a = \sum_{i=1}^\ell
 \alpha_i v_i + \sum_{j=1}^{n-\ell} \beta_j u_j$. Then
 \begin{eqnarray*}
   \langle a, (I-V^TV)g \rangle &=& \sum_{i=1}^\ell \alpha_i 
                                    v_i^T (I-V^TV)g +
                                    \sum_{j=1}^{n-\ell} \beta_j
                                    u_j^T(I-V^TV)g \\
   &=& \sum_{i=1}^\ell \alpha_i 
                                    (v_i^T - v_i^T)g +
                                    \sum_{j=1}^{n-\ell} \beta_j
       u_j^Tg \\
   &=& \sum_{j=1}^{n-\ell} \beta_j
       u_j^T\left(\sum_{i=1}^\ell X_i v_i + \sum_{h=1}^{n-\ell} Y_h
       u_h\right) \\
                                &=& \sum_{j=1}^{n-\ell} \beta_j Y_j  \\
                                &\sim& \Norm\left(0,\sum_{j=1}^{n-\ell} \beta_j^2\right) \\
   &\sim& \Norm\left(0,\|a(I-V^TV)\|^2\right).
 \end{eqnarray*}
\end{proof}

\paragraph{Reducing to $m \geq n$.}
Assume the input matrix $A$ has $m < n$. Initialize an empty matrix
$V$ ($0 \times n$) and add all rows of $A$ to $V$ using Orthogonalize
(Algorithm~\ref{alg:orth}). Also initialize a coloring $x = 0 \in
\R^n$. While the number of rows in $V$ is less
than $n$, there is a non-zero vector $g$ in the orthogonal complement
of the rows of $V$. Such a vector $g$ can e.g. be found by sampling
each coordinate as an independent $\Norm(0,1)$ distributed random
variable and then updating $g \gets g(I-V^TV)$. Given such a $g$,
observe that $\langle a_i, g \rangle = 0$ for all rows $a_i$ of
$A$. Now compute the largest coefficient $\eps$ such that $\|x + \eps
g\|_\infty = 1$, i.e. $\eps$ is the largest scaling such that at least
one coordinate of $x + g$ becomes $1$ in absolute value. Update $x$ by
adding $\eps g$ to it. For every entry $x_i$ that becomes $1$ in
absolute value, add $e_i$ to $V$ using
Algorithm~\ref{alg:orth}. Repeat until there are $n$ rows in
$V$.

We observe that when the above process terminates, we still have $Ax =
0$. Moreover, we must have added $n - m$ vectors $e_i$ to $V$ using
Algorithm~\ref{alg:orth}. Hence there are at most $n-(n-m) \leq m$
coordinates left that are not in $\{-1,1\}$. We may now use
PartialColoring starting from this vector $x$.

For the running time, first notice that adding all rows of $A$ to $V$
takes time $O(mn^2)$ since a call to Algorithm~\ref{alg:orth} takes
$O(n^2)$ time. Next, observe that each iteration adds at least one $e_i$
to $V$. Thus we have at most $n$ iterations. In any iteration,
computing $g$ takes $O(n^2)$ time. Computing the coefficient $\eps$
takes $O(mn)$ time and adding any $e_i$ to $V$ takes $O(n^2)$ time. Thus the total running time is $O(mn^2 +
n^3)$.

\bibliographystyle{abbrv}
\bibliography{bibliography}

\end{document}